\pdfoutput=1
\documentclass[11pt]{article}
\usepackage[T1]{fontenc}
\usepackage{geometry} \geometry{margin=1in}
\usepackage[english]{babel}
\usepackage{cite}
\usepackage{amsmath}
\usepackage{amsthm}
\usepackage{amsfonts} % \mathbb
\usepackage{amssymb} % \nexists
\usepackage{xspace}
\usepackage{mathtools} % \DeclarePairedDelimiter
\usepackage{braket} % \set \Braket
\usepackage[linesnumbered,vlined]{algorithm2e}
\usepackage{xcolor} % \textcolor
\usepackage{footnote} 
\usepackage{environ} % \NewEnviron
\usepackage{suffix} % \WithSuffix
\usepackage{apptools}
\usepackage{hyperref}
\usepackage{float}
\usepackage[capitalise]{cleveref} % \cref
\usepackage{mathrsfs}

\makesavenoteenv{tabular}% footnotes in tables
\makesavenoteenv{table}% footnotes in tables

\makeatletter
\newtheorem*{rep@theorem}{\rep@title}
\newcommand{\newreptheorem}[2]{%
	\newenvironment{rep#1}[1]{%
		\def\rep@title{#2 \ref{##1}}%
		\begin{rep@theorem}}%
		{\end{rep@theorem}}}
\makeatother

\usepackage{thmtools} % should be after rep@theorem
\usepackage{thm-restate} % \begin{restatable} should be after rep@theorem

\DeclareMathOperator{\poly}{poly}

%\crefname{claim}{Claim}{Claims}
%\crefname{property}{Property}{Properties}
%\crefname{algocf}{Algorithm}{Algorithms}
%\Crefname{algocf}{Algorithm}{Algorithms}

%\newtheorem{theorem}{Theorem}[section]
%\newreptheorem{theorem}{Theorem}
%\newtheorem{lemma}[theorem]{Lemma}
%\newtheorem{claim}[theorem]{Claim}
%\newtheorem{definition}{Definition}
%\newtheorem{axiom}{Axiom}
%\newtheorem{fact}{Fact}
%\newtheorem{corollary}[theorem]{Corollary}
%\newtheorem{observation}{Observation}
%\newtheorem{proposition}{Proposition}
%\newtheorem{question}{Question}

\crefname{question}{Question}{Questions}
\crefname{theorem}{Theorem}{Theorems}
\Crefname{lemma}{Lemma}{Lemmas}
\Crefname{figure}{Figure}{Figures}
\Crefname{claim}{Claim}{Claims}
\Crefname{observation}{Observation}{Observations}

\newtheorem{theorem}{Theorem}
\newtheorem{lemma}{Lemma}

\newtheorem{claim}{Claim}
\newtheorem{definition}{Definition}
\newtheorem{corollary}{Corollary}

\newtheorem{observation}{Observation}
\newtheorem{question}{Question}

\theoremstyle{remark}

\theoremstyle{definition}

\newcommand{\ID}{\mathsf{ID}}

\newcommand{\CCC}{\mathcal{C}}
\newcommand{\DDD}{\mathcal{D}}

\newcommand{\diam}{\mathsf{diameter}}

\newcommand{\congestion}{\mathsf{congestion}}
\newcommand{\dilation}{\mathsf{dilation}}

\newcommand{\dist}{d}

\newcommand{\OPT}{\mathsf{OPT}}

\newcommand{\ip}[1]{\left}

\newcommand{\congest}{\ensuremath{\mathsf{CONGEST}}\xspace}

\newcommand{\kbroadcast}{$k$\textsc{-broadcast}\xspace}

\newcommand\bigO[1]{\ensuremath{{O}(#1)}}
\WithSuffix\newcommand\bigO*[1]{\ensuremath{{O}\left(#1\right)}}
\newcommand\tildeBigO[1]{\ensuremath{{\tilde{{O}}}(#1)}}
\WithSuffix\newcommand\tildeBigO*[1]{\ensuremath{{\tilde{{O}}}\left(#1\right)}}
\newcommand\littleO[1]{\ensuremath{{o}(#1)}}
\WithSuffix\newcommand\littleO*[1]{\ensuremath{{o}\left(#1\right)}}
\newcommand\tildeLittleO[1]{\ensuremath{{\tilde{{o}}}(#1)}}
\WithSuffix\newcommand\tildeLittleO*[1]{\ensuremath{{\tilde{{o}}}\left(#1\right)}}
\newcommand\bigOmega[1]{\ensuremath{{\Omega}(#1)}}
\WithSuffix\newcommand\bigOmega*[1]{\ensuremath{{\Omega}\left(#1\right)}}
\newcommand\tildeBigOmega[1]{\ensuremath{{\tilde{{\Omega}}}(#1)}}
\WithSuffix\newcommand\tildeBigOmega*[1]{\ensuremath{{\tilde{{\Omega}}}\left(#1\right)}}
\newcommand\littleOmega[1]{\ensuremath{{\omega}(#1)}}
\WithSuffix\newcommand\littleOmega*[1]{\ensuremath{{\omega}\left(#1\right)}}
\newcommand\tildeLittleOmega[1]{\ensuremath{{\tilde{{\omega}}}(#1)}}
\WithSuffix\newcommand\tildeLittleOmega*[1]{\ensuremath{{\tilde{{\omega}}{\left(#1\right)}}}}
\newcommand\bigTheta[1]{\ensuremath{{\Theta}(#1)}}
\WithSuffix\newcommand\bigTheta*[1]{\ensuremath{{\Theta}\left(#1\right)}}
\newcommand\tildeTheta[1]{\ensuremath{{\tilde{{\Theta}}}(#1)}}
\WithSuffix\newcommand\tildeTheta*[1]{\ensuremath{{\tilde{{\Theta}}\left(#1\right)}}}

\makeatletter
\newcommand*{\whp}{%
    \@ifnextchar{.}%
        {w.h.p}%
        {w.h.p.\@\xspace}%
}
\makeatother

\newcommand{\polylog}{\poly\log}
\newcommand{\tmix}{\tau_{\operatorname{mix}}}

\RestyleAlgo{boxruled}
\LinesNumbered
\setlength{\intextsep}{4pt}

\let\oldnl\nl% Store \nl in \oldnl
\newcommand{\nonl}{\renewcommand{\nl}{\let\nl\oldnl}}% Remove line number for one line

\SetKwFunction{AssignHelpers}{Assign-Helpers}
\SetKwFunction{TransmitSkeleton}{Transmit-Skeleton}
\SetKwFunction{LearnNeighborhood}{Learn-Neighborhood}
\SetKwFunction{LearnHelpers}{Learn-Helpers}
\SetKwFunction{ReassignSkeletons}{Reassign-Skeletons}

% REMOVE ALL PROOFS
\NewEnviron{killcontents}{}
% Uncomment to remove proof/algorithm environment
% \let\proof\killcontents
% \let\endproof\endkillcontents

% \let\endproofsketch\endkillcontents
% \let\proofof\killcontents
% \let\endproofof\endkillcontents
% \let\algorithm\killcontents
% \let\endalgorithm\endkillcontents

%\title{Fast Broadcast in Networks with High Edge-Connectivity}

\title{Fast Broadcast in Highly Connected Networks}

%\author{Anonymous authors}

\author{\hspace{2cm} Shashwat Chandra\footnote{National University of Singapore. Email: shashwatchandra@u.nus.edu} \and Yi-Jun Chang\footnote{National University of Singapore. Email: cyijun@nus.edu.sg} \and Michal Dory\footnote{University of Haifa. Email: mdory@ds.haifa.ac.il} \hspace{2cm} \and Mohsen Ghaffari\footnote{Massachusetts Institute of Technology. Email: ghaffari@mit.edu} \and Dean Leitersdorf\footnote{National University of Singapore. Email: dean.leitersdorf@gmail.com}}

\date{}

\begin{document}
\begin{titlepage}
\maketitle
\thispagestyle{empty}

\begin{abstract}
We revisit the classic \emph{broadcast} problem, wherein we have $k$ messages, each composed of $O(\log{n})$ bits, distributed arbitrarily across a network. The objective is to broadcast these messages to all nodes in the network. In the distributed \congest model, a textbook algorithm solves this problem in $O(D+k)$ rounds, where $D$ is the diameter of the graph.
While the $O(D)$ term in the round complexity is unavoidable---given that $\Omega(D)$ rounds are necessary to solve broadcast in \emph{any graph}---it remains unclear whether the $O(k)$ term is needed in all graphs. In cases where the minimum cut size is one, simply transmitting messages from one side of the cut to the other would require $\Omega(k)$ rounds. However, if the size of the minimum cut is larger, it may be possible to develop faster algorithms.
This motivates the exploration of the broadcast problem in networks with high \emph{edge connectivity}.

In this work, we present a \emph{simple} randomized distributed algorithm for performing $k$-message broadcast in $O(((n+k)/\lambda)\log n)$ rounds in any $n$-node simple graph with edge connectivity $\lambda$. When $k = \Omega(n)$, our algorithm is \emph{universally optimal}, up to an $O(\log n)$ factor, as its complexity nearly matches an information-theoretic $\Omega(k/\lambda)$ lower bound that applies to all graphs, even when the network topology is known to the algorithm.

The setting $k = \Omega(n)$ is particularly interesting because several fundamental problems can be reduced to broadcasting $\Omega(n)$ messages. Our broadcast algorithm finds several applications in distributed computing, enabling $O(1)$-approximation for all distances and $(1+\epsilon)$-approximation for all cut sizes in $\tilde{O}(n/\lambda)$ rounds. 
%For these problems, the round complexity $\tilde{O}(n/\lambda)$ is essentially the best possible, as learning the list of all IDs requires $\Omega(n/\lambda)$ rounds for any graph topology.
\end{abstract}

\end{titlepage}

{ 
\tableofcontents
}
\thispagestyle{empty} % No number on TOC Page
\newpage

\setcounter{page}{1}

\section{Introduction}\label{sec:intro}

In this work, we provide a new tool for fast information dissemination in distributed networks that exploits the edge connectivity of the network. Our algorithm works in the standard distributed \congest model~\cite{peleg2000distributed} where the communication network is represented as an $n$-node graph $G=(V,E)$. In this model, the nodes communicate in synchronous rounds, where in each round, each node can exchange a message of $O(\log{n})$ bits with each of its neighbors.
%\paragraph{The broadcast problem.} 
We focus on the classic broadcast problem, which is a central building block in many distributed algorithms. %There are $k$ messages of $O(\log{n})$ bits stored arbitrarily in the network, and the goal is to send these $k$ messages to all the nodes in the graph. 

\begin{definition} [\kbroadcast]
    Given $k$ messages $\mathcal{M}$ of $O(\log n)$ bits, where each message $m \in \mathcal{M}$ is originally stored in some node in the graph $G$, where each node can hold an arbitrary number of messages initially, the \kbroadcast problem requires ensuring that all nodes in the graph know all messages in $\mathcal{M}$. %It is assumed that $k$ is at most polynomial in $n$. 
\end{definition}

A textbook distributed algorithm for \kbroadcast takes $O(D+k)$ rounds \cite{very_old_k_broadcast,peleg2000distributed}, for graphs with diameter $D$, by first collecting the messages at the root of a BFS tree and then broadcasting them to the network. Since the diameter of the tree is $O(D)$, a complexity of $O(D+k)$ rounds can be obtained via pipelining.
This bound is \emph{existentially} optimal in the sense that there are networks where $\Omega(D+k)$ rounds are required to solve broadcast. 
A closer look shows that the $O(D)$ term in the round complexity is \emph{universally} optimal, in the sense that $\Omega(D)$ rounds are required to solve broadcast in \emph{any graph}, as there are nodes at a distance $\Omega(D)$ from each other.
%It is easy to see that the $O(D)$ term in the round complexity is unavoidable, as there are nodes at distance $\Omega(D)$ from each other. 
However, it is not clear that the $O(k)$ term is needed in all graphs. If the minimum cut size is one, then just sending the information from one side of the cut to the other would take $\Omega(k)$ rounds, but if the size of the minimum cut is larger, then it may be possible to obtain faster algorithms. %but in graphs with higher edge connectivity it is desirable to exploit it and obtain faster algorithms. 
This motivates the study of the broadcast problem in graphs with high \emph{edge connectivity}, where the edge connectivity $\lambda$ of a graph is defined as the size of the minimum cut in the graph. This work aims to investigate the following fundamental question for unweighted simple graphs.
%We focus on unweighted simple graphs, and look at the following question.

\begin{question}\label{question1}
    Can \kbroadcast be solved in $o(D+k)$ rounds in graphs with high edge connectivity?
\end{question}

\paragraph{Broadcast via tree packing.} The $O(D+k)$ round complexity for the broadcast problem comes from routing $k$ messages over one tree with diameter $O(D)$. This leads to a high \emph{congestion}, as we need to route many messages over the same edges of the tree. To obtain a faster algorithm, a natural approach is to route the messages over many edge-disjoint spanning trees. This allows us to parallelize the computation and reduce congestion. A beautiful work of Tutte \cite{tutte1961problem} and Nash-Williams \cite{nash1961edge} from 1961 showed that any graph with edge connectivity $\lambda$ has a collection of $\lfloor \lambda/2 \rfloor$ edge-disjoint spanning trees, see also \cite{kundu1974bounds}. We call such a collection a \emph{tree packing}.

In a \emph{fractional} tree packing, each tree has a weight, and for each edge $e$, the total weight of trees that contains $e$ is at most one.  
A {fractional} tree packing with parameters similar to that of Tutte~\cite{tutte1961problem} and Nash-Williams~\cite{nash1961edge} can be computed efficiently in \congest: Censor-Hillel, Ghaffari and Kuhn showed a distributed algorithm that decomposes a graph with edge connectivity $\lambda$ into fractionally edge-disjoint weighted spanning trees with total weight $\lfloor \lambda/2 \rfloor (1- \epsilon)$ in $O(D+\sqrt{n \lambda})$ rounds~\cite{DBLP:conf/podc/Censor-HillelGK14}. 
Using the lower bound technique from~\cite{sarma2012distributed}, Censor-Hillel, Ghaffari and Kuhn showed that $\tilde{\Omega}(D+\sqrt{n/\lambda})$ rounds are needed to compute such a decomposition~\cite{DBLP:conf/podc/Censor-HillelGK14}. Such decompositions can indeed reduce the \emph{congestion} of a broadcast algorithm. 
However, they do not provide a bound on the \emph{diameter} of the spanning trees computed. In the worst case, the trees could have diameter $\Omega(n)$ even if the diameter $D$ of the original graph is small, and then even sending one message over the trees results in an $\Omega(n)$ complexity, which can be much higher than the diameter $D$ of the graph. 
 
To circumvent this issue, Ghaffari~\cite{DBLP:conf/icalp/Ghaffari15} studied \emph{universally optimal} algorithms for the broadcast problem, whose goal is to find an algorithm that achieves the best possible complexity for \emph{any graph}. In his work~\cite{DBLP:conf/icalp/Ghaffari15}, Ghaffari showed how to construct a fractional tree packing with total weight $\Omega(k/(\OPT \log{n}))$ and diameter $O(\OPT \log{n})$, where $\OPT$ is the optimal round complexity for broadcasting $k$ messages of the underlying graph $G$ when the topology of $G$ is known to the algorithm. 
Once the tree packing is computed, any subsequent \kbroadcast instance on $G$ can be solved in $\tilde{O}(\OPT)$ rounds, which is \emph{the best possible} up to a polylogarithmic factor!  However, a significant drawback of Ghaffari's algorithm~\cite{DBLP:conf/icalp/Ghaffari15} is that computing the tree packing already costs $\tilde{O}(D+k)$ rounds, so this algorithm does not really break the $O(D+k)$ round complexity of the textbook broadcast algorithm. Ghaffari's work~\cite{DBLP:conf/icalp/Ghaffari15} naturally left open the following questions.

\begin{question}\label{g1}
Is it possible to improve the round complexity of the tree packing computation to $o(D + k)$, or \emph{even} to $O(\OPT) \cdot n^{o(1)}$ or $\tilde{O}(\OPT)$?
\end{question}

Even if the above question can be answered affirmatively, it is not clear how much advantage we obtain over the $O(D+k)$ round complexity of the textbook broadcast algorithm, so the following question is important.

\begin{question}\label{g2}
Is it possible to quantitatively determine an approximate value of $\OPT$  as a simple function of the underlying graph $G$?
\end{question}

In a recent work~\cite{DBLP:conf/icalp/ChuzhoyPT20}, Chuzhoy, Parter, and Tan studied low-diameter tree packings on graphs with small diameter $D$. They showed that for any parameter $\eta \in [\lambda]$, there is a randomized distributed algorithm that computes a collection of $\lambda$ spanning trees of diameter ${O}((101 \lambda \eta^{-1} \ln{n})^D)$ in $\tilde{O}((101 \lambda \eta^{-1} \ln{n})^D)$ rounds in such a way that each edge in the graph appears in at most $O(\eta \log n)$ trees. In the \emph{centralized} setting, a collection of $\lfloor \lambda/2 \rfloor$ spanning trees of diameter $O((101 \lambda \ln{n})^D)$  where each edge in the graph appears in at most two trees can be constructed in polynomial time. 
%These tree packings have low diameter only if $D$ is very small. 
In general, the diameter of these tree packings can still be $\Omega(n)$ even when the diameter $D$ of the underlying graph $G$ is moderate.

These tree packings have implications for information dissemination in graphs with very small diameter: In any graph with diameter $D = o(\log n)$, it was shown in~\cite{DBLP:conf/icalp/ChuzhoyPT20} that it is possible to send $K$ bits of information from a node $s$ to another node $t$ in $\tilde{O}(K^{1-1/{(D+1)}}+K/\lambda)$ rounds.

\subsection{Our Contribution}\label{sect:contribution}
In this work, we answer \cref{question1,g1,g2}. We present a surprisingly \emph{simple} distributed algorithm for \kbroadcast in $\tilde{O}((n+k)/\lambda)$ rounds in any $n$-node simple graph with edge connectivity $\lambda$, addressing \cref{question1}. 

 \begin{restatable}{theorem}{broadcast} \label{thrm:main} \kbroadcast can be solved \whp in $O((n\log n)/\delta+(k\log n)/\lambda)$ rounds in any $n$-node simple graph $G=(V, E)$ with edge connectivity $\lambda$ and minimum degree $\delta$. 
\end{restatable}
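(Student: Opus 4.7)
The plan is to run a simple randomized gossip: in every round, each node $v$ picks a uniformly random message from its current collection $\mathcal{M}_v$ and transmits it along \emph{each} of its incident edges, with the choices made independently across edges. No coordination, precomputed tree, or topology knowledge is required. The goal is to show that after $T = O((n\log n)/\delta + (k\log n)/\lambda)$ such rounds, every node holds every message with high probability.

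Fix a message $m$ and let $S_t \subseteq V$ be the set of nodes that know $m$ after round $t$. Any $v \in S_t$ sends $m$ along each incident edge with probability $1/|\mathcal{M}_v| \geq 1/k$, so a specific boundary neighbor $u \notin S_t$ learns $m$ in round $t+1$ with probability at least $1 - (1 - 1/k)^{d_u(S_t)}$, where $d_u(S_t)$ counts the edges from $u$ into $S_t$. Summing over $u \notin S_t$ links $\E[|S_{t+1}| - |S_t| \mid S_t]$ to the edge boundary $|\partial S_t|$. The structural ingredient is a two-sided boundary bound: the $\delta$-minimum-degree assumption together with the handshake identity yields $|\partial S| \geq s(\delta - s + 1)$ for $s = |S| \leq \delta$ (and symmetrically when $n - s \leq \delta$), while $\lambda$-edge-connectivity gives $|\partial S| \geq \lambda$ unconditionally. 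This suggests a phase decomposition: an \emph{initial doubling phase} with multiplicative rate $1 + \Omega(\delta/k)$ that brings $|S_t|$ from $1$ to $\Omega(\delta)$; a symmetric \emph{final absorption phase}; and a \emph{bulk phase} in which the $\Omega(\lambda)$ boundary drives additive growth $\Omega(\lambda/k)$. Per-round deliveries across distinct boundary edges are independent within a round, so Chernoff/Azuma inequalities convert expected drifts into high-probability concentration, and a final union bound over the $k$ messages gives the theorem.

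The main obstacle I anticipate is tightening the bulk phase: the naive additive drift $\Omega(\lambda/k)$ would cost $\Theta(nk/\lambda)$ rounds per message to cross the middle range, which is much too slow. Overcoming this should exploit the classical diameter bound $D = O(n/\delta)$ of Erd\H{o}s--Pach--Pollack--Tuza for graphs of minimum degree $\delta$, so that each message, once it has escaped its starting neighborhood, needs only $O(n/\delta)$ ``flooding hops'' to reach the rest of the graph. The cut-bandwidth $\lambda$ then enters as the aggregate rate at which the $k$ messages share every cut, contributing the additive $O((k\log n)/\lambda)$ term via an amortized pipelining argument. Combining the diameter-bounded per-message spreading with the cut-bounded pipelining cost should yield the claimed $O((n\log n)/\delta + (k\log n)/\lambda)$ round complexity.
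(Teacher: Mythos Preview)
Your approach is completely different from the paper's, and the part you yourself flag as the obstacle is a genuine gap, not a detail to be filled in.

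The paper does not analyze gossip. It proves a structural lemma (\cref{thm:tree_packing}, via \cref{lem-mainlemma}): if each edge of $G$ is thrown independently into one of $\lambda' = \Theta(\lambda/\log n)$ buckets, then \whp every bucket is a spanning subgraph of diameter $O((n\log n)/\delta)$. The algorithm then (i) numbers the $k$ messages by a BFS aggregate (\cref{lemma:number_items}), (ii) assigns $O((k\log n)/\lambda)$ messages to each bucket, and (iii) runs the textbook $O(D+k)$ pipelined broadcast (\cref{very_old_k_broadcast}) in all buckets in parallel. The two terms in the bound are exactly the bucket diameter and the per-bucket load; there is no per-message spread analysis and no competition between messages to control.

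Your bulk-phase repair is not an argument but a restatement of the target. The diameter bound $D=O(n/\delta)$ says a message reaches everyone in $O(n/\delta)$ hops \emph{if it is flooded}; in your protocol a node holding $|\mathcal{M}_v|$ messages forwards the tracked message $m$ on a given edge only with probability $1/|\mathcal{M}_v|$, so one ``hop'' for $m$ costs $\Theta(|\mathcal{M}_v|)$ rounds, not $1$. Concretely, take a chain of $\Theta(n/\delta)$ cliques of size $\delta$ joined by $\lambda$-edge cuts, with all $k$ messages starting in the first clique. By the time a fixed $m$ sits at the frontier clique, that clique already holds $\Theta(k)$ messages, so $m$ crosses the next cut at rate $\Theta(\lambda/k)$; this happens at each of the $\Theta(n/\delta)$ cuts. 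Your per-message drift analysis therefore yields only $\tilde{O}\!\left(\frac{n}{\delta}\cdot\frac{k}{\lambda}\right)$, not $\tilde{O}\!\left(\frac{n}{\delta}+\frac{k}{\lambda}\right)$. Converting the multiplicative loss into an additive one would require a genuine pipelining statement for \emph{random} forwarding---e.g., a coupling or potential function showing that the $k$ messages share the successive bottlenecks nearly as well as a scheduled pipeline---and nothing in the sketch supplies that. Absent such an argument, the plan does not reach the claimed bound.
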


It always holds that the minimum degree $\delta \geq \lambda$, so the round complexity of our broadcast algorithm is $\tilde{O}((n+k)/\lambda)$.
\cref{thrm:main} is based on a new approach to partition the network $G$ into $\Omega(\lambda / \log n)$ edge-disjoint spanning subgraphs of diameter $O((n\log n)/\delta)$ without communication. The decomposition allows us to solve any subsequent \kbroadcast instance efficiently.

 \begin{restatable}{theorem}{treepackinga}\label{thm:tree_packing}
 Let $G$ be any simple graph with edge connectivity $\lambda$ and minimum degree $\delta$.  
 Partition $G$ into $\lambda' = \lambda / (C \log n)$ edge-disjoint  subgraphs $G_1=(V, E_1)$, \ldots, $G_{\lambda'}=(V, E_{\lambda'})$ by putting each edge of $G$ in a uniformly random $G_i$ independently, then $G_i$ is a spanning subgraph with $\diam(G_i)=O((C n\log n)/\delta)$ for all $i \in [\lambda']$ with probability $1 - n^{-\Omega(C)}$.   
\end{restatable}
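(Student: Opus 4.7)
My plan is to establish both claims via Chernoff concentration combined with a BFS analysis of the random subgraph.

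Step 1 (spanning and minimum degree). For each vertex $v$, $\deg_{G_i}(v)$ is a sum of $\deg_G(v)$ independent Bernoulli$(1/\lambda')$ random variables, so its mean is $\deg_G(v)/\lambda' \geq \delta C \log n / \lambda \geq C\log n$ (using $\deg_G(v) \geq \delta \geq \lambda$). A multiplicative Chernoff bound yields $\deg_{G_i}(v) \geq \deg_G(v) \cdot C \log n / (2\lambda)$ with probability at least $1 - n^{-\Omega(C)}$, and a union bound over $v \in V$ makes this simultaneous. In particular, each $\deg_{G_i}(v) \geq C \log n/2 \geq 1$, so $G_i$ is a spanning subgraph with $\delta_{G_i} \geq \delta C \log n/(2\lambda)$.

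Step 2 (diameter via EPPT). I would invoke the Erdős--Pach--Pollack--Tuza theorem $\diam(H) \leq 3n/(\delta_H+1) - 1$ for any connected $n$-vertex simple graph $H$. The proof relies on the ``three-layer'' observation that, in any BFS, $|L_{r-1}| + |L_r| + |L_{r+1}| \geq \delta_H + 1$, since any vertex in layer $L_r$ must have all of its neighbors in $L_{r-1} \cup L_r \cup L_{r+1}$. Applied to $G_i$ with the minimum-degree bound from Step 1, this gives $\diam(G_i) \leq O(n\lambda/(\delta C \log n))$, matching the claimed $O((Cn\log n)/\delta)$ up to constants whenever $\lambda \leq O(C^2\log^2 n)$.

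The main obstacle is the regime $\lambda = \omega(C^2 \log^2 n)$, where the Erdős--Pach--Pollack--Tuza bound applied to the global minimum degree $\delta_{G_i}$ is looser than the target by a factor of up to $\lambda / \log^2 n$. I expect to address this by strengthening the three-layer inequality using the \emph{per-vertex} degree bound $\deg_{G_i}(v) \geq \deg_G(v) C\log n/(2\lambda)$ from Step 1 (which exceeds the global $\delta_{G_i}$ whenever $\deg_G(v) > \delta$), together with the small-set expansion of $G$---namely $|\partial_G(S)| \geq |S|(\delta - |S|)$ for all $S$, which forces nearly-geometric growth of $|B_r^{G_i}(s)|$ while $|B_r| \leq \delta/2$. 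Combining this ``small-ball'' geometric growth phase with the EPPT-style layer-counting in the ``large-ball'' phase, and then a final union bound over the $n$ possible sources and the $\lambda'$ subgraphs, would yield the full claim.
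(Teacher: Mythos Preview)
Your Step~1 is correct, and so is the EPPT bound in Step~2, including your identification of where it falls short. The gap is in the proposed fix for $\lambda \gg \log^2 n$: the assertion that the edge-expansion inequality $|\partial_G(S)| \geq |S|(\delta - |S|)$ in $G$ ``forces nearly-geometric growth of $|B_r^{G_i}(s)|$'' is exactly the crux of the whole theorem, and you do not prove it. Edge expansion in $G$ counts \emph{edges} leaving $S$; to get ball growth in the sampled subgraph $G_i$ you must show that after keeping each edge with probability $p = C\log n/\lambda$, those edges hit $\Omega(|L_r|)$ \emph{distinct} new vertices with high probability. This needs a genuine probabilistic argument: you would have to track that, conditioning on the current ball $B_r$, the edges from the new layer $L_r$ into $V\setminus B_r$ are still unrevealed, that the indicator events $\{w \in L_{r+1}\}$ are independent across $w$ (by simplicity), and then handle the case split on whether many outside vertices $w$ have $|N_G(w)\cap L_r|\cdot p \gtrsim 1$. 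None of this is in your proposal. The per-vertex degree bound you invoke does not help: the hard case is a $\delta$-regular graph with $\delta = \lambda$, where it coincides with the global minimum-degree bound and EPPT still gives only $O(n\lambda/(\delta C\log n))$.

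The paper avoids this direct BFS analysis by a different device: it views the $p$-sampling as $L = \Theta(C\log n)$ independent rounds of $(1/\lambda)$-sampling, and grows the ball one hop per round using only that round's coin flips. In round $j{+}1$ every vertex $u \in B_j(v)$ has an entirely fresh set of $\deg_G(u) \geq \delta$ coin flips, so as long as $|B_j(v)| < \delta/4$ each such $u$ fails to acquire a new neighbor with probability at most $(1-1/\lambda)^{3\delta/4} < 1/2$; a Markov-then-Chernoff argument over the $L$ rounds then gives $|B_L(v)| \geq \delta/4$ with probability $1 - n^{-\Omega(C)}$. The diameter bound follows by the disjoint-balls-along-a-shortest-path argument (your ``large-ball phase''). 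The multi-round decoupling is the one nontrivial idea here, and it is precisely what substitutes for the step your sketch leaves open.
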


The decomposition of \cref{thm:tree_packing} can be computed without communication in the sense that we may simply let each edge join one of the $\lambda' = \Omega(\lambda / \log n)$ subgraphs uniformly at random and independently. In particular, for each edge $e = \{u,v\}$ with $\ID(u) > \ID(v)$, we can let $u$ decide which subgraph to which $e$ belongs.

By spending extra $O((n\log n)/\delta)$ rounds to perform a BFS in parallel for all the edge-disjoint spanning subgraphs in \cref{thm:tree_packing}, we may obtain a \emph{tree packing} of $\Omega(\lambda / \log n)$ edge-disjoint spanning trees with diameter $O((n\log n)/\delta)$.  \cref{thm:tree_packing} is not only interesting from a distributed computing point of view but also provides a new \emph{existential} bound for low-diameter tree packings. To the best of our knowledge, such a low-diameter tree packing was not known before. 
The parameters of our tree packings nearly match the \emph{existential} lower bounds from \cite{DBLP:conf/wdag/GhaffariK13}, which showed a family of graphs with $n$ nodes and diameter $O(\log{n})$ where in any tree packing the diameter of all trees is $\Omega(n/\lambda)$, except at most $O(\log{n})$ trees that may have a smaller diameter.

The decomposition of \cref{thm:tree_packing} also yields a tree packing of at least $\lambda$ spanning trees with diameter $O((n\log n)/\delta)$ where each edge belongs to $O(\log n)$ trees. There is an alternative proof to show the existence of such a tree packing using the techniques in~\cite{DBLP:conf/icalp/ChuzhoyPT20}. The proof was suggested by an anonymous reviewer, who kindly allowed us to include the proof in our paper, see \cref{sec:existential}.

\paragraph{Optimality.}
Due to an $\OPT = \Omega(k / \lambda)$ information-theoretic lower bound, in the regime where $k = \Omega(n)$, our broadcast algorithm is \emph{universally optimal}, up to a logarithmic factor, as its round complexity is $O(\OPT \log n)$ for \emph{any graph} $G$, where $\OPT$ is the round complexity of an optimal broadcast algorithm on $G$ that is designed especially for $G$ and knows the entire topology of $G$. 
Our broadcast algorithm implies that $\OPT$ is within an $O(\log n)$ factor of $k/\lambda$  when $k = \Omega(n)$, addressing \cref{g2}. Moreover, in the regime where $k = \Omega(n)$, the decomposition of \cref{thm:tree_packing} implies that a fractional tree packing with \emph{the same} parameters as that of Ghaffari~\cite{DBLP:conf/icalp/Ghaffari15} can be constructed in $O(\OPT \log n)$ rounds, addressing \cref{g1}. Furthermore, our approach only uses integral weights.
In addition, for all values of $k \leq n$, our algorithm nearly matches an \emph{existential} lower bound of Ghaffari and Kuhn~\cite{DBLP:conf/wdag/GhaffariK13}, see \cref{sec:broadcast} for the details.

\paragraph{Remark.} We emphasize that the algorithm of \cref{thrm:main}, which is based on \cref{thm:tree_packing}, requires the \emph{knowledge} of edge connectivity $\lambda$. As we will later discuss, the value of $\lambda$ can be learned in $\tilde{O}(n/\delta)$ rounds using techniques from \cite{DBLP:conf/icalp/ChuzhoyPT20,ghaffari2022universally}, so \kbroadcast can be still solved \whp in $\tilde{O}((n+k)/\lambda)$ rounds without the knowledge of $\lambda$. There is an alternative approach by doing an exponential search: Compute the decomposition of \cref{thm:tree_packing} with $\tilde{\lambda} = \delta, \delta/2, \delta/4, \ldots$ until it yields a desired tree packing. The total number of iterations is $O\left(\log \frac{\delta}{\lambda} \right)$. Checking the validity of a tree packing takes $O((n\log n)/\delta)$ rounds, as we just need to verify whether each $G_i$ is a connected subgraph with diameter $O((n\log n)/\delta)$. Thus, the overall cost can be upper bounded by $O((n\log n)/\lambda)$. Based on this approach, we infer that \kbroadcast can be solved \whp in $O(((n+k)/\lambda)\log n)$ rounds without the knowledge of $\lambda$.

\subsection{Applications}

As an important special case of \cref{thrm:main}, we can broadcast $k = \Theta(n)$ messages in $O((n \log n)/\lambda)$ rounds.
In particular, in this round complexity, each node can broadcast a message of $O(\log n)$ bits to all other nodes in the graph. This immediately yields a simulation of one round of the \emph{broadcast congested clique} model~\cite{drucker2014on}. The round complexity $O((n \log n)/\lambda)$ of the simulation is universally optimal up to a logarithmic factor.
 Our broadcast algorithm can also be used to broadcast a spanner or a sparsifier to the whole graph efficiently, leading to new approximation algorithms for distances and cuts.

\subsubsection*{Approximate all pairs shortest paths}
Using our broadcast algorithm, we obtain fast distributed algorithms for approximate \emph{All Pairs Shortest Paths} (APSP).
We say that a distance estimate $\tilde{d}$ is an $(\alpha, \beta)$-\emph{approximation} of the true shortest-path distance $d$ if \[d(u,v) \leq \tilde{d}(u,v) \leq \alpha d(u,v) + \beta\] for all nodes $u$ and $v$ in the graph. In other words, $\alpha$ is the multiplicative error and $\beta$ is the additive error. If there is no additive error, we write $\alpha$-\emph{approximation} to denote $(\alpha, 0)$-approximation.
We show randomized algorithms with the following guarantees.

\begin{enumerate}
    \item $(3, 2)$-approximate unweighted APSP takes  $\tilde{O}(n/\lambda)$ rounds.\label{unweigh}
    \item $(2k-1)$-approximate weighted APSP takes $\tilde{O}\left(n^{1+\frac{1}{k}}/\lambda\right)$ rounds, for any integer $k \geq 1$.\label{weig_k}
    \item $O\left(\frac{\log n}{\log \log n}\right)$-approximate weighted APSP takes  $\tilde{O}(n/\lambda)$ rounds. \label{weig_log}
\end{enumerate}

\cref{unweigh} is obtained using a clustering approach. First, we show that a graph with minimum degree $\delta$ can be decomposed into $\tilde{O}(n/\delta)$ clusters of constant diameter. Then we exploit the fact that the cluster graph only has $\tilde{O}(n/\delta)$ clusters to simulate a linear-time algorithm for unweighted APSP by
Peleg, Roditty, and Tal \cite{unweightedapsp} in just $\tilde{O}(n/\delta)$ rounds on the cluster graph. To obtain estimates of the distances in the \emph{original} graph all nodes should know to which cluster each node belongs. This is a task of broadcasting $n$ messages, which can be solved efficiently by our broadcast algorithm. See  \cref{sec:APSP} for details.

\cref{weig_k} is obtained by computing a \emph{spanner}, a sparse subgraph that estimates the distances, and broadcasting the spanner to the whole graph. \cref{weig_log} is a special case of \cref{weig_k}.
Here we use the Baswana--Sen algorithm~\cite{weightedspanner}, which constructs a spanner with $\tilde{m} = O\left(k \cdot n^{1+\frac{1}{k}}\right)$ edges that preserves the distances up to a multiplicative $(2k-1)$ factor in $O(k^2)$ rounds, and then we use
  \cref{thrm:main} to broadcast the spanner to the entire graph.
See \cref{sec:wAPSP} for the details. 

\paragraph{Optimality.}
Our APSP algorithms are \emph{universally optimal} in the following sense: In order to write down the estimates of all distances or all cut sizes, it is \emph{necessary} to first learn the list of all IDs in the graph, and there is a simple information-theoretic  $\Omega(n / \lambda)$ universal lower bound for learning the list of all IDs which holds for any graph. Therefore, any algorithm that solves approximate APSP in $o(n / \lambda)$ rounds for certain graphs must involve strange tricks that recompute the IDs of some nodes in these graphs or must work under the assumption that the list of all IDs was known to the algorithm.

In the weighted case, we show that $\Omega(n / \lambda)$ is still a lower bound for approximate APSP even if the IDs are initially known to all nodes. This is an \emph{existential} lower bound in the sense that the lower bound only applies to a special family of graphs. The proof idea is to encode messages as edge weights in such a way that solving approximate APSP requires some node to learn all the messages precisely. See Section \ref{sec:APSP_LB} for the details.

 To the best of our knowledge, our results are the first \emph{sublinear} algorithms for APSP for graphs with high edge connectivity. APSP is a central and well-studied problem. In general graphs, there are $\tilde{O}(n)$-round \congest algorithms \cite{nanongkai2014distributed,DBLP:conf/podc/LenzenP15,lenzen2013fast,unweightedapsp,holzer2012optimal,DBLP:conf/stoc/BernsteinN19} even for exact APSP in weighted graphs \cite{DBLP:conf/stoc/BernsteinN19}, which is tight due to an $\tilde{\Omega}(n)$ lower bound \cite{nanongkai2014distributed,DBLP:conf/wdag/Censor-HillelKP17,frischknecht2012networks,holzer2012optimal,DBLP:conf/wdag/AbboudCK16}. The lower bound from \cite{nanongkai2014distributed} holds even for any polynomial approximation for weighted graphs or any polylogarithmic approximation for unweighted graphs. This lower bound is shown in a graph family where $\lambda=1$, and we prove that for graphs with higher edge connectivity, we can get much faster algorithms. We remark that we cannot hope for a similar result for exact APSP, and a certain approximation is needed, as \cite{frischknecht2012networks,holzer2012optimal} show a family of graphs with edge connectivity $\lambda = \Omega(n)$ where obtaining a $(3/2-\epsilon)$-approximation for unweighted APSP takes $\Omega(n/\log{n})$ rounds. 

 \paragraph{Comparison with prior work.}
In a recent work~\cite{DBLP:conf/spaa/Censor-HillelLP21}, Censor-Hillel,  Leitersdorf, and  Polosukhin designed \emph{sublinear} algorithms for APSP in graphs with small \emph{mixing time}.
%Another work that shows \emph{sublinear} algorithms for APSP based on parameters of the graph is \cite{DBLP:conf/spaa/Censor-HillelLP21}, 
They showed that a $(3+\epsilon)$ approximation for weighted, undirected APSP can be obtained in $(n^{1/2}+n/\delta) \cdot \tmix \cdot 2^{O(\sqrt{\log{n}})}$ rounds \whp in \congest, where $\tmix$ is the mixing time of the graph.  Their algorithm is based on developing efficient simulations of algorithms in the congested clique that exploits the assumption that the minimum degree $\delta$ and the mixing time $\tmix$ are high.

Our new APSP algorithms \emph{improve} over their algorithms: The round complexity $\tilde{O}(n/\lambda)$ of our APSP algorithms is better than their round complexity $(n^{1/2}+n/\delta) \cdot \tmix \cdot 2^{O(\sqrt{\log{n}})}$ for \emph{any graph}, as we always have $\tmix = \Omega(\delta / \lambda)$. To see this, recall that $\tmix = \Omega(1/\phi)$, where $\phi$ is the conductance of the graph, and we observe that $\phi = O(\lambda / \delta)$ by considering an arbitrary minimum cut. It is worth noting that our approach is \emph{significantly simpler} than the approach taken in~\cite{DBLP:conf/spaa/Censor-HillelLP21}. 

\subsubsection*{Approximating cuts}
An additional application of our broadcast algorithm is that it allows us to distribute a \emph{spectral sparsifier} to the whole graph efficiently, which in particular allows us to approximate all the cut sizes in the graph. Using the spectral sparsifier of Koutis and Xu~\cite{koutis2016simple}, we can approximate the weights of all the cuts in the graph within a $(1+\epsilon)$ factor in $\tilde{O}(n/(\lambda \epsilon^2))$ rounds \whp. See Section \ref{sec:cuts} for the details. 

There are previous algorithms that approximate the minimum cut~\cite{DBLP:conf/icalp/ChuzhoyPT20,ghaffari2020faster} or the maximum cut~\cite{censor2017fast,DBLP:conf/wdag/KawarabayashiS18} faster. While $(1+\epsilon)$-approximate weighted minimum cut can already be computed in $\tilde{O}(n/\lambda)$ rounds~\cite{DBLP:conf/icalp/ChuzhoyPT20}, building a spectral sparsifier allows us to approximate \emph{all the cuts} in the graph simultaneously. To the best of our knowledge, our algorithm is the first one to achieve this goal in \emph{sublinear}-in-$n$ rounds. 

\subsubsection*{An application to secure distributed computing}  The tree packing algorithm resulting from \cref{thm:tree_packing} is of independent interest due to the wide applications of tree packings. In particular, it has an application to resilient and secure distributed graph algorithms, which is a topic that has received a lot of attention recently~\cite{DBLP:conf/podc/ParterY19,DBLP:conf/soda/ParterY19a,DBLP:conf/wdag/HitronP21a,DBLP:conf/wdag/HitronP21,DBLP:conf/wdag/HitronPY22,DBLP:conf/podc/Parter22,DBLP:conf/podc/FischerP23,DBLP:conf/innovations/HitronPY23}.
 In these works, compilation schemes that turn any given distributed algorithm into a resilient and secure one were designed.
 An algorithm is said to be \emph{$f$-mobile-resilient} if the algorithm works correctly even with the presence of an adversary that can adaptively control a possibly different subset of $f$ edges in each round. 
In a recent work by Fischer and Parter~\cite{DBLP:conf/podc/FischerP23}, they showed that if a  tree packing with at least $\lambda$ trees with congestion $\polylog n$ and tree diameter $d$ is given, then there is a compiler that turns any given \congest algorithm into an $f$-mobile-resilient one with $f = \tilde{O}(\lambda)$ such that the round complexity overhead of the compilation is $\tilde{O}(d)$.
 To complement this result, Fischer and Parter showed a tree packing algorithm that achieves a near-optimal value of $d$, up to a polylogarithmic factor, in  $\tilde{O}(\lambda d^2)$ rounds in the \congest model using the approach of Ghaffari~\cite{DBLP:conf/icalp/Ghaffari15}.
 \cref{thm:tree_packing} can be applied to construct a tree packing with $d = O((n\log n)/\delta)$ that is suitable for their application. In the setting where $\lambda$ and the optimal value of $d$ are large, our approach leads to a faster algorithm.

\subsection{Additional Related Work}\label{sec:related}

Due to the wide applications of broadcast and tree packing, our work naturally connects with various research topics in distributed computing.  In this section, we examine related research on these topics.

\paragraph{Universally optimal algorithms via low-congestion shortcuts.} An algorithm is  \emph{universal optimal} if its complexity on any instance is nearly the complexity of the best algorithm that is designed specifically for this instance. The concept of universal optimality was introduced in~\cite{Garay1998} and has attracted a lot of attention in recent years in the distributed setting. 

A fruitful line of research, based on the notion of \emph{low-congestion shortcuts}~\cite{ghaffari2016distributed} led to universally optimal algorithms for many central problems such as minimum spanning tree~\cite{ghaffari2016distributed}, minimum cut~\cite{ghaffari2016distributed, ghaffari2022universally}, and approximate single source shortest paths~\cite{DBLP:conf/soda/ZuzicGYHS22,DBLP:conf/stoc/RozhonGHZL22}.

Earlier works in this series of research focus on special graph classes such as  planar graphs~\cite{ghaffari2016distributed}, bounded-genus graphs~\cite{haeupler2016near}, or even any $H$-minor-free graphs~\cite{haeupler2018minor,ghaffari2021low}. More recent works in this direction were able to achieve nearly universally optimal complexity for some problems in general graphs assuming that the graph topology is known to the algorithm~\cite{universally_optimal_stoc2021}.

For some problems in general graphs, including minimum spanning tree, $(1+\epsilon)$-approximate single-source shortest paths, and $(1+\epsilon)$-approximate minimum cut, it is possible to achieve a round complexity that is polynomial in $\OPT$ without the known topology assumption, where $\OPT$ is the complexity of the best algorithm that is designed specifically for the underlying graph~\cite[Theorem 1.4]{haeupler2022hop}.

%This line of work also led to near-optimal algorithms for graphs with $\poly(\log{n})$ mixing time \cite{GhaffariKS17,GhaffariL2018}. The algorithms for graphs with edge connectivity $\lambda$ from \cite{DBLP:conf/icalp/ChuzhoyPT20} discussed above can also be seen as a special case of this line of research. 
A limitation of the low-congestion shortcut framework is that it is only suitable to tackle a certain class of problems, such as minimum spanning tree, minimum cut, and approximate single source shortest paths. All these problems can be solved in $\tilde{O}(D+\sqrt{n})$ rounds in general graphs. For these problems, it is known that shortcuts capture their optimal round complexities very well~\cite{universally_optimal_stoc2021}. 

Our work, together with the prior work by Ghaffari~\cite{DBLP:conf/icalp/Ghaffari15}, offers a new approach for designing universal optimal algorithms for problems that cannot be solved using the low-congestion shortcut framework, such as broadcast and approximate APSP. We believe that our approach will find further applications for characterizing the universally optimal complexity as well as designing universally optimal algorithms for other problems in the future.

\paragraph{Distributed algorithms in highly connected graphs.}
Chuzhoy, Parter, and Tan also utilized their distributed tree packing algorithm to obtain improved algorithms in highly connected graphs~\cite{DBLP:conf/icalp/ChuzhoyPT20}.
 They showed that a minimum spanning tree and a minimum cut can be computed in $\tilde{O} (\min \{ \sqrt{n/\lambda} + n^{D/(2D+1)}, n/\lambda\})$ rounds, and $n^{o(1)}$-approximate
  single-source shortest paths can be computed in ${O} (\min \{ \sqrt{n/\lambda} + n^{D/(2D+1)}, n/\lambda\}) \cdot n^{o(1)}$ rounds. In particular, all these problems can be solved in $\tilde{O}(n/\lambda)$ rounds. For unweighted minimum cut, it might be even possible to get a better complexity of $\tilde{O}(D+\sqrt{n/\lambda})$ rounds~\cite[Footnote 4]{ghaffari2020faster}. 
  
  These problems are quite different in nature compared to the problems that we tackle in this work in that these problems can be solved in $\tilde{O}(D+\sqrt{n})$ rounds in general graphs, which is already sublinear in $n$. In contrast, as discussed earlier, approximate APSP requires $\tilde{\Omega}(n)$ rounds to solve in general graphs.

  At a high level, the problems considered in~\cite{DBLP:conf/icalp/ChuzhoyPT20} can be solved by breaking the computation into a series of aggregate computations. For example, the minimum spanning tree problem can be solved in $O(\log{n})$ iterations where in each iteration the goal is to find minimum outgoing edges from connected components. Aggregation tasks such as computing the minimum values in disjoint connected components can be solved efficiently in $\tilde{O}(n/\lambda)$ rounds as shown in \cite{DBLP:conf/icalp/ChuzhoyPT20}. On the other hand, when we solve broadcast it is not enough to compute an aggregate function of the messages as we need to deliver all of them.

In addition to information dissemination, there exist distributed algorithms that exploit the high connectivity of the underlying graph to deal with failures and to design secure distributed algorithms~\cite{censor2018fast, DBLP:conf/icalp/ChuzhoyPT20, DBLP:conf/wdag/HitronPY22}. There is also a line of research that studies distributed algorithms that depend on the \emph{vertex connectivity} of the underlying graph~\cite{DBLP:conf/podc/Censor-HillelGK14,DBLP:journals/talg/Censor-HillelGG17, DBLP:conf/soda/Censor-HillelGK14}. 

  %In addition, there are works that exploit the edge or vertex connectivity of the graph for information dissemination \cite{DBLP:conf/podc/Censor-HillelGK14,DBLP:conf/icalp/Ghaffari15, DBLP:conf/icalp/ChuzhoyPT20} or for dealing with failures and designing secure distributed algorithms \cite{censor2018fast, DBLP:conf/icalp/ChuzhoyPT20, DBLP:conf/wdag/HitronPY22}. In the introduction we discussed decompositions of the graph based on edge connectivity \cite{DBLP:conf/podc/Censor-HillelGK14,DBLP:conf/icalp/Ghaffari15, DBLP:conf/icalp/ChuzhoyPT20}, there is also a line of work that studies similar decompositions based on vertex connectivity \cite{DBLP:conf/podc/Censor-HillelGK14,DBLP:journals/talg/Censor-HillelGG17, DBLP:conf/soda/Censor-HillelGK14}. 

\paragraph{Additional related works on cuts and connectivity.} There is a large body of research related to cuts and connectivity in distributed networks. This includes distributed algorithms for computing minimum cuts \cite{pritchard2011fast, DBLP:conf/wdag/GhaffariK13, DBLP:conf/wdag/NanongkaiS14, DBLP:conf/stoc/DagaHNS19, DBLP:conf/wdag/Parter19, ghaffari2020faster,DBLP:conf/stoc/DoryEMN21, ghaffari2022universally}, minimum vertex cuts \cite{pritchard2011fast, DBLP:conf/podc/Censor-HillelGK14, DBLP:conf/wdag/Parter19, DBLP:conf/wdag/ParterP22,jiang2023finding}, connectivity certificates \cite{thurimella1995sub, DBLP:conf/wdag/Parter19,DBLP:conf/stoc/DagaHNS19, bezdrighin2022deterministic}, and low-cost $k$-edge-connected graphs \cite{DBLP:journals/dc/Censor-HillelD20, DBLP:conf/podc/Dory18, DBLP:conf/podc/DoryG19, DBLP:conf/soda/Dory023}. 

In particular, tree packings play an important role in designing fast algorithms for computing a minimum cut in a graph. Tree packings were first used in the seminal near-linear time centralized algorithm of Karger~\cite{karger2000minimum} and was later used in many algorithms, including the recent centralized algorithms in~\cite{DBLP:conf/stoc/MukhopadhyayN20,DBLP:conf/icalp/GawrychowskiMW20,DBLP:conf/sosa/GawrychowskiMW21}. In the \congest model, this approach based on tree packings led to near-optimal distributed algorithms for the minimum cut problem~\cite{DBLP:conf/wdag/NanongkaiS14,DBLP:conf/stoc/DagaHNS19,ghaffari2020faster, DBLP:conf/stoc/DoryEMN21, ghaffari2022universally}. In these algorithms, usually, the goal is to find a small collection of spanning trees such that one of them has at most two edges of the minimum cut, and then this property can be exploited to find the minimum cut of the graph. Unlike the applications in information dissemination, the diameter of the trees in the tree packing is not relevant to the applications in minimum cut computation. This is because long-distance communication in a connected subgraph is facilitated by low-congestion shortcuts in these applications.

\subsection{Roadmap}
In \cref{sec:prelim}, we present the basic definitions and algorithmic tools.
In \cref{sect:broadcast}, we present our broadcast algorithm.
In \cref{sec:APSP_all}, we present applications of our broadcast algorithms for approximating distances and cuts in graphs. 
In \cref{sec:existential}, we discuss an alternative approach for low-diameter tree packings. 
In \cref{appendix_lower_bound}, we show a lower bound for tree packings with low congestion.
%In \cref{appendix_prelim}, we present the missing proofs.

\section{Preliminaries}\label{sec:prelim}

Throughout the paper, for a graph $G = (V, E)$, we write $n = |V|$, $m = |E|$, $D = \diam(G)$, $\delta =$ the minimum degree of $G$, and $\lambda =$ the edge connectivity of $G$. For our algorithms we assume that the values of $\delta$ and $\lambda$ are known by all nodes, as we show that we can learn these values without increasing the asymptotic round complexity of the algorithms. 
For each node $v \in V$, we write $\deg(v)$ to denote the number of edges incident to $v$ and write $N(v)$ to denote the set of neighbors of $v$.  For a subset of nodes $A \subseteq V$, we write $\deg_{A}(v)$ to denote the number of edges between $v$ and the nodes in $A$.
 We use $\tilde{O}(\cdot)$, $\tilde{\Theta}(\cdot)$, and $\tilde{\Omega}(\cdot)$ to hide any $\poly \log n$ factors. We say that an event occurs \emph{with high probability} (\whp) if the event occurs with probability $1 - 1/\poly(n)$. For any positive integer $x$, we write $[x] = \{1, 2, \ldots, x\}$.

Unless otherwise stated, we assume that the graph $G$ under consideration is \emph{simple}, \emph{connected}, \emph{unweighted}, and \emph{undirected}, so we always have $\delta \geq \lambda \geq 1$ and $D = O(n/\delta) = O(n/\lambda)$.

\begin{observation}[Diameter upper bound]
\label{lemma:diameter_vs_lambda}
    The diameter $D$ of any simple graph is $O(n/\delta)$.
\end{observation}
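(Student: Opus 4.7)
The plan is to perform a breadth-first search from a carefully chosen root and argue that the BFS cannot have too many layers, because the minimum degree forces each short window of consecutive layers to contain many vertices.

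First, I would fix a pair of vertices $u, v \in V$ with $\dist(u,v) = D$, which exist by definition of the diameter, and run a BFS from $u$. Letting $L_i = \{ x \in V : \dist(u, x) = i \}$ for $0 \leq i \leq D$, the existence of $v$ ensures that \emph{every} layer $L_0, L_1, \ldots, L_D$ is non-empty. This last point is the only tiny subtlety; starting the BFS at a diameter endpoint rather than at an arbitrary vertex is what makes it go through.

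Next, I would apply the degree condition layer-by-layer. For any $1 \leq i \leq D-1$, pick any $w \in L_i$. Since neighbors in a BFS can differ in layer by at most one, all $\deg(w) \geq \delta$ neighbors of $w$ lie in $L_{i-1} \cup L_i \cup L_{i+1}$; together with $w$ itself this gives $|L_{i-1}| + |L_i| + |L_{i+1}| \geq \delta + 1$.

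Finally, I would sum over \emph{disjoint} triples of consecutive layers. Partition the indices into blocks centred at $1, 4, 7, \ldots, 3k+1, \ldots$ with $3k+1 \leq D-1$. The triples $(L_{3k}, L_{3k+1}, L_{3k+2})$ are pairwise disjoint, and each contributes at least $\delta + 1$ vertices by the previous step. Since the layers partition $V$, $n \geq \lfloor (D-1)/3 \rfloor (\delta + 1)$, which rearranges to $D = O(n/\delta)$. The proof is entirely elementary; there is no real obstacle beyond the boundary bookkeeping already handled by choosing the BFS root at a diameter endpoint.
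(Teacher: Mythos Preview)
Your proof is correct and is essentially the same argument as the paper's, just packaged via BFS layers instead of neighborhoods along a shortest path. The paper picks a shortest path $P=(v_1,\dots,v_{D+1})$, observes that $N(v_i)\cap N(v_{i+3})=\emptyset$ for any $i$ (else $P$ could be shortcut), and sums $|N(v_{3j-2})|\geq\delta$ over $j\in[D/3]$ to get $n\geq (D/3)\delta$; your layer-based version makes the disjointness automatic at the cost of a tiny bit more bookkeeping, but the content is identical.
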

\begin{proof}
    Let there be any shortest path $P = (v_1, \dots, v_{D+1})$ of length $D$. For any $i \in [D-2]$, observe that $N(v_i) \cap N(v_{i+3}) = \emptyset$. To see this, assume that some $u \in N(v_i) \cap N(v_{i+3})$, then, there is a path of length $2$ between $v_i$ and $v_{i+3}$. However, the path between $v_i$ and $v_{i+3}$ in $P$ is of length $3$, contradicting the fact that $P$ is a shortest path.
    Therefore, as $N(v_i) \cap N(v_{i+3}) = \emptyset$, we conclude that $n \geq \sum_{j \in [D/3]} |N(v_{3j - 2})| \geq (D/3)\cdot \delta$, implying $D = O(n/\delta)$.
\end{proof}

\paragraph{Model of distributed computing.}
We consider the \congest model of distributed computing~\cite{peleg2000distributed}, where the network is abstracted as an undirected graph $G=(V,E)$ where each node $v \in V$ corresponds to a computer and each edge $e \in\{u,v\}$ corresponds to a communication link. We assume each node $v \in V$ has a unique $O(\log n)$-bit identifier $\ID(v)$, that is, $\ID(v) \in [n^c]$ for some constant $c \geq 1$. The communication proceeds in synchronous rounds, where per round each node can send one $O(\log n)$-bit message to each of its neighbors. 
Initially, the topology of the network $G$ is unknown to the nodes in the network. At the end of the computation, each node is required to know its part of the output. For instance, for the All-Pairs Shortest Paths (APSP) problem, each node needs to learn its distance to every other node in the graph.

\paragraph{The broadcast problem.} 
In the subsequent discussion, we discuss the basic result of the \kbroadcast problem.
We define the \emph{congestion} of an algorithm as the maximum number of messages passing through an edge $e$ over the whole execution of the algorithm, ranging over all edges $e \in E$ in the graph.
It is well-known that once a leader is elected, the \kbroadcast problem can be solved in $O(D + k)$ rounds with congestion $O(k)$. Roughly, the algorithm first computes a BFS tree $T$ rooted at the leader in $O(D)$ rounds. After that, the messages can be broadcast to all nodes in $O(k + D)$ rounds in a pipelined fashion in $T$. This algorithm is useful for small $k$ in graphs with a small diameter.

\begin{lemma}[Basic broadcast algorithm \cite{peleg2000distributed}]
\label{very_old_k_broadcast}
     Given that there is a unique leader in the graph $G=(V,E)$, \kbroadcast  can be solved in $O(k + D)$ rounds deterministically in such a way that the number of messages passing through an edge $e$ is at most $O(k)$, for each $e\in E$.
\end{lemma}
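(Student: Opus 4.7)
The plan is to carry out the standard three-phase pipeline: build a shortest-path tree rooted at the unique leader $\ell$, convergecast all $k$ messages up to $\ell$ along this tree, and then broadcast them back down. Since only tree edges are used after the BFS is set up, the congestion bound will come for free from the fact that each message traverses each tree edge at most twice.

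\textbf{Phase 1 (BFS tree).} Starting from $\ell$, a standard flooding BFS constructs a shortest-path tree $T$ rooted at $\ell$ in $O(D)$ rounds. Each node learns its parent in $T$, and each edge of $G$ carries only $O(1)$ messages during this phase, so this contributes only $O(1)$ to the congestion on every edge. Since $T$ is a BFS tree, its depth is at most $D$.

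\textbf{Phase 2 (pipelined convergecast).} To collect the messages at $\ell$, I would use the classical pipeline trick: impose an arbitrary total order on the $k$ messages (e.g., by their originating node ID and a local counter), and let each node $v$ maintain a sorted priority queue $Q_v$ of messages it still owes its parent. In every round, $v$ transmits the smallest message in $Q_v$ that it has not yet sent up, subject to the standard token-management rule that each node forwards messages in increasing order and never forwards a message twice. A textbook inductive argument on the BFS depth shows that the root $\ell$ receives all $k$ messages within $O(D + k)$ rounds: the first message reaches $\ell$ by round $O(D)$, and thereafter one new message arrives at $\ell$ per round. Every tree edge carries each of the $k$ messages exactly once upward, contributing $k$ to its congestion.

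\textbf{Phase 3 (pipelined broadcast).} Now $\ell$ knows all $k$ messages; it sends them down $T$ in the fixed order, one per round per child edge. By the symmetric argument, every node receives all $k$ messages within another $O(D + k)$ rounds, and each tree edge carries each message once downward, adding another $k$ to its congestion. Summing the three phases gives total time $O(D + k)$ and total congestion $O(k)$ on every tree edge, and $O(1)$ on every non-tree edge (from Phase 1 only), which yields the claimed $O(k)$ bound.

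The only nontrivial step is the pipelining analysis in Phase 2, but it is entirely standard: the main obstacle to watch out for is verifying that the greedy "send smallest unsent" rule at each internal node never stalls, i.e., that the queue at $v$ never runs dry before $v$ has forwarded all messages from its subtree. This follows by induction on the depth: if every node at depth $d+1$ finishes forwarding by round $d + k$, then a node at depth $d$ receives a steady supply of new messages from its children and can itself finish by round $(d-1)+k$. No additional machinery beyond the standard pipelining lemma is needed.
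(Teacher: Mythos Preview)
Your proposal is correct and follows exactly the approach the paper indicates: the paper does not give a full proof of this lemma but simply cites it as a textbook result and sketches ``first compute a BFS tree $T$ rooted at the leader in $O(D)$ rounds, after that the messages can be broadcast to all nodes in $O(k+D)$ rounds in a pipelined fashion in $T$.'' Your three-phase description (BFS, pipelined convergecast, pipelined broadcast) is the standard fleshing-out of that sketch, and the congestion bound follows for the reason you give.
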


\paragraph{BFS, leader election, and ID assignment.} It is well-known that a breadth-first search tree (BFS) can be computed in $O(D)$ rounds in \congest deterministically, without requiring any knowledge of $n$ or $D$. The BFS algorithm works as long as each node is given a unique identifier that fits in  $O(\log n)$ bits.

\begin{lemma}
[Breadth-first search~\cite{peleg2000distributed}]
\label{lemma:construct_bfs_tree}
    Given any graph $G$, it is possible to compute a BFS tree $T$ in $G$ in $O(D)$ rounds in such a way that each node knows which of its incident edges are in $T$.  The algorithm works as long as each node is given a unique identifier that fits in  $O(\log n)$ bits and does not require any knowledge about the graph.
\end{lemma}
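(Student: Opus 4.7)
The plan is to run a flooding-style algorithm that implicitly elects as the BFS root the node $r^\star$ with the globally smallest identifier. Every node $v$ maintains a triple $(c_v, d_v, p_v)$ representing its current best candidate root, tentative distance to that candidate, and tentative parent, initialized to $(\ID(v), 0, \bot)$. In each round, $v$ sends $(c_v, d_v)$ to every neighbor, and upon receiving $(c, d)$ from a neighbor $u$ such that $(c, d+1)$ is lexicographically smaller than $(c_v, d_v)$, node $v$ updates its triple to $(c, d+1, u)$. Each message is $O(\log n)$ bits, so this fits in one \congest round.

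A straightforward induction on the round number $i$ shows that after $i$ rounds, $c_v$ equals the smallest $\ID(w)$ over all nodes $w$ at graph distance at most $i$ from $v$, and $d_v$ is the correct distance from $v$ to that minimizer. Hence after $D$ rounds every node $v$ satisfies $c_v = \ID(r^\star)$ and $d_v = \dist(v, r^\star)$, and $p_v$ points to a neighbor lying one hop closer to $r^\star$ along some shortest path. The collection of parent pointers therefore defines a BFS tree rooted at $r^\star$; one additional round, in which each non-root node notifies $p_v$, lets both endpoints of every tree edge mark it as belonging to $T$.

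The main obstacle is termination detection: nodes know neither $n$ nor $D$, so they cannot simply halt after $D$ rounds. I would handle this by a standard convergecast on the emerging tree. A node declares its subtree \emph{complete} after an idle round in which $(c_v, d_v)$ did not change and after every current child has reported completion; the root collects all such signals and then broadcasts a termination message down the completed tree. Both phases take $O(D)$ rounds, so the overall complexity remains $O(D)$ and no a priori knowledge of the graph is required. The point requiring the most care is that parent pointers might still be revised while the convergecast is underway, but this is controlled by the monotonicity of $(c_v, d_v)$ under the lexicographic order: once $c_v$ reaches $\ID(r^\star)$ it is frozen, so the tree structure stabilizes well before termination is declared.
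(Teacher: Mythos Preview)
The paper does not prove this lemma; it is stated with a citation to Peleg's textbook and used as a black box. So there is no paper proof to compare against, and your sketch supplies content the paper simply omits.

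Your flooding phase is fine: the induction gives that after $D$ rounds every node has $c_v=\ID(r^\star)$, $d_v=\dist(v,r^\star)$, and the parent pointers form a BFS tree. The gap is in termination detection. Your rule---declare a subtree complete after one idle round once all current children have reported---can fire at a node that is only a \emph{temporary} local root. Take the path $a$--$b$--$c$ with $\ID(a)=1$, $\ID(b)=3$, $\ID(c)=2$. After round~$1$, $b$ has updated to $(1,1,a)$, so $c$ has no children; $c$ itself was idle (it only heard $(3,0)$ from $b$) and still has $c_c=\ID(c)$. By your rule $c$ now declares its trivial subtree complete and, believing itself a root, announces termination---yet in round~$2$ it receives $(1,1)$ from $b$ and must change. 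Your closing sentence asserts that ``the tree structure stabilizes well before termination is declared,'' but that is precisely the statement in need of proof, and this example shows it fails for the protocol as written. The monotonicity of $(c_v,d_v)$ only guarantees that a node never regresses; it does not prevent a locally minimal ID from looking like a finished root for several rounds before the true minimum's wave arrives. A correct treatment (as in Peleg) restarts the echo whenever a node's candidate drops, or separates leader election from a level-synchronized BFS; either way more is needed than what you have sketched.
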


BFS can be used to solve the \emph{leader election} problem by setting the root as the leader. BFS can also be used to compute new identifiers. 

 \begin{restatable}{lemma}{IDassign} \label{lemma:number_items}
    Suppose each node $v$ originally holds a number $x_v \leq n^c$ of items, for some constant $c > 0$. Let $X = \sum_{v \in V} x_v$. It is possible to assign distinct identifiers to the items in the graph $G$ such that each item has a unique identifier in $[X]$ and each node $v$ knows the identifiers of its items. The process completes in $O(D)$ rounds.
 \end{restatable}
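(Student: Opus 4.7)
The plan is a standard two-pass convergecast/downcast on a rooted BFS tree. First I would run $O(D)$ rounds of minimum-$\ID$ flooding (every node repeatedly forwards the smallest identifier it has seen) to elect the smallest-$\ID$ node as a leader $r$, and then invoke \cref{lemma:construct_bfs_tree} to build a BFS tree $T$ rooted at $r$ in $O(D)$ further rounds; after this each non-root node knows its parent edge in $T$.

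Next, I would run an upcast (convergecast) along $T$: each node $v$ waits until every child $u$ in $T$ has reported its subtree total $X_u$, then computes $X_v = x_v + \sum_{u} X_u$ and forwards $X_v$ to its parent. Since $X = \sum_v x_v \leq n \cdot n^c = n^{c+1}$, every partial sum fits in $O(\log n)$ bits, so each message is legal in \congest, and the convergecast finishes in $O(D)$ rounds, with the root learning $X$.

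Finally, I would run a downcast from $r$ that allocates consecutive ID intervals. The root reserves $[1, x_r]$ for its own items and, fixing any order $c_1, c_2, \ldots$ of its children in $T$, sends each $c_i$ the starting offset $s_i = x_r + 1 + \sum_{j<i} X_{c_j}$ in a single round (one message per child edge). A node $v$ that receives an offset $s$ from its parent assigns its own items the IDs $s, s+1, \ldots, s + x_v - 1$ and then forwards starting offsets to its own children analogously, using the subtree sums $X_{\cdot}$ it cached during the upcast. Each transmitted offset is at most $X = O(n^{c+1})$, hence $O(\log n)$ bits, and the downcast advances one tree level per round, completing in $O(D)$ rounds. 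The assigned IDs tile $[X]$ by construction.

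Summed over the three phases, the total cost is $O(D)$ rounds. I do not see any substantive obstacle: the only point one has to verify is that every quantity transmitted (a subtree count or a starting offset) is bounded by $X = O(n^{c+1})$ and therefore fits in a single \congest message, which is immediate from the hypothesis $x_v \leq n^c$.
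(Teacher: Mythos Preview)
Your proposal is correct and follows essentially the same approach as the paper: build a BFS tree, convergecast subtree item counts to the root, then downcast interval allocations so that each node carves off its own items and subdivides the remainder among its children. The only cosmetic differences are that you explicitly elect a leader by min-$\ID$ flooding and verify that the transmitted counts/offsets fit in $O(\log n)$ bits, whereas the paper simply invokes \cref{lemma:construct_bfs_tree} and leaves the message-size check implicit.
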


 \begin{proof}
    Compute a BFS tree $T$ of $G$ using \cref{lemma:construct_bfs_tree} in $O(D)$ rounds. Let $v_0$ be the root of $T$.   
    Over $T$, each parent sums the number of items in its subtrees. After $O(D)$ rounds, $v_0$ knows the number of items in each of its subtrees. Node $v_0$ takes identifiers $[x_{v_0}]$ for its own items, and subdivides the range of identifiers $\{x_{v_0} + 1, x_{v_0} + 2, \ldots, X\}$ to each of its children, according to the number of items in each of its subtrees. In turn, each of its children takes identifiers for its items and subdivides the range of identifiers it received from its parent to its children. After $O(D)$ rounds, each item has a unique identifier in $[X]$.
\end{proof}

\paragraph{Knowledge of $\delta$ and $\lambda$.}

We show that all nodes can learn the values of $\delta$ and $\lambda$ in $\tilde{O}(n/\delta)$ rounds. 

\begin{lemma}
    Given a graph $G=(V,E)$, all nodes can learn the values of $\delta$ and $\lambda$ in $\tilde{O}(n/\delta)$ rounds.
\end{lemma}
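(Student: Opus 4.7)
The plan is to handle the two quantities separately. Learning $\delta$ is straightforward via BFS aggregation along a spanning tree; learning $\lambda$ is the nontrivial part and will be reduced either to an off-the-shelf distributed minimum-cut algorithm or to a clean exponential search on top of \cref{thm:tree_packing}.

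For $\delta$, I would first build a BFS tree $T$ of $G$ using \cref{lemma:construct_bfs_tree} in $O(D)$ rounds. Each node already knows its own degree locally. A standard pipelined min-convergecast along $T$ lets the root compute $\delta = \min_{v \in V} \deg(v)$ in $O(D)$ additional rounds, and a downcast along $T$ then propagates $\delta$ back to every node in another $O(D)$ rounds. Since $D = O(n/\delta)$ by \cref{lemma:diameter_vs_lambda}, the entire phase costs $O(n/\delta)$ rounds, and at its end every node knows $\delta$ exactly.

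For $\lambda$, with $\delta$ known globally I would invoke the distributed minimum-cut routines of \cite{DBLP:conf/icalp/ChuzhoyPT20,ghaffari2022universally}, whose complexity on simple unweighted graphs can be bounded by $\tilde{O}(n/\delta)$ using $D = O(n/\delta)$ from \cref{lemma:diameter_vs_lambda}. A conceptually cleaner self-contained alternative uses only tools already present in this paper, namely an exponential search on top of \cref{thm:tree_packing}: try $\tilde{\lambda} \in \{\delta, \delta/2, \delta/4, \ldots\}$; for each guess, locally (without any communication) assign each edge to one of $\tilde{\lambda}/(C\log n)$ subgraphs as in \cref{thm:tree_packing}; then run a parallel BFS in each subgraph for $O((n \log n)/\delta)$ rounds and accept $\tilde{\lambda}$ iff every piece is connected within that diameter. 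Because the subgraphs are edge-disjoint, the parallel BFSes compose without congestion, so each round of all of them fits in one \congest round. By \cref{thm:tree_packing}, the largest accepted $\tilde{\lambda}$ is within a constant factor of $\lambda$ \whp, and over $O(\log n)$ iterations the total cost is $\tilde{O}(n/\delta)$; broadcasting this estimate along $T$ takes an extra $O(D) = O(n/\delta)$ rounds.

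The main obstacle I foresee is confirming that the invoked minimum-cut machinery genuinely yields the $\tilde{O}(n/\delta)$ bound rather than the more familiar $\tilde{O}(n/\lambda)$ or $\tilde{O}(D+\sqrt{n/\lambda})$ bounds that appear in the literature; this ultimately relies on fine-grained properties of \cite{DBLP:conf/icalp/ChuzhoyPT20,ghaffari2022universally} that parametrize the cost by $\delta$ and $D$ rather than by $\lambda$. If the lemma is read as asking only for a constant-factor estimate of $\lambda$—which is all that the broadcast algorithm and its downstream applications in this paper actually consume—the exponential-search argument above is already self-contained and gives the claimed $\tilde{O}(n/\delta)$ round complexity.
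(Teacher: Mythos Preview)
Your treatment of $\delta$ matches the paper's proof exactly: BFS, min-aggregation, and the bound $D=O(n/\delta)$ from \cref{lemma:diameter_vs_lambda}.

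For $\lambda$, the paper takes precisely your Option~A, but it supplies the detail you flag as the ``main obstacle.'' The point is not merely that $D=O(n/\delta)$; the min-cut algorithm of \cite{ghaffari2022universally} runs in $\tilde{O}(T_{\mathcal{SQ}}+\mathcal{SQ}(G))$ rounds, where $\mathcal{SQ}(G)$ is the shortcut quality and $T_{\mathcal{SQ}}$ the time to construct such shortcuts. The crucial observation the paper makes is that the proof of Theorem~7.10 in \cite{DBLP:conf/icalp/ChuzhoyPT20}, which builds shortcuts of quality $O(n/\lambda)$ in $O(1)$ rounds, in fact only uses the hypothesis that the minimum degree is at least $\lambda$; hence the same proof gives shortcuts of quality $O(n/\delta)$ in $O(1)$ rounds. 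Plugging $T_{\mathcal{SQ}}=O(1)$ and $\mathcal{SQ}(G)=O(n/\delta)$ into \cite{ghaffari2022universally} yields the $\tilde{O}(n/\delta)$ bound for computing $\lambda$ exactly. Your justification ``using $D=O(n/\delta)$'' is not the right hook, since those algorithms are parameterized by shortcut quality rather than diameter.

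Your Option~B does not prove the lemma as stated. The exponential search on top of \cref{thm:tree_packing} only certifies that for the accepted $\tilde\lambda$ the random partition into $\tilde\lambda/(C\log n)$ pieces happens to produce low-diameter spanning subgraphs; by \cref{thm:tree_packing} this is guaranteed once $\tilde\lambda\le\lambda$, but nothing prevents the test from accepting some $\tilde\lambda>\lambda$ as well. So you obtain a value that is good enough to run the broadcast algorithm, but not an exact (or even constant-factor) determination of $\lambda$. The paper indeed mentions this exponential search elsewhere, but only as a way to bypass knowledge of $\lambda$ for the broadcast application, not as a proof of the present lemma.
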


\begin{proof}
    Learning the value of $\delta$, the minimum degree in the graph, can be done in $O(D)=O(n/\delta)$ rounds via computation of a minimum over a BFS tree, as initially each node knows its own degree. The value of $\delta$ can be broadcast to all nodes in additional $O(D)$ rounds.  

  The value of $\lambda$ can be learned in $\tilde{O}(n/\delta)$ rounds using techniques from \cite{DBLP:conf/icalp/ChuzhoyPT20,ghaffari2022universally}. The work \cite{ghaffari2022universally} shows that the value of the minimum cut, which is $\lambda$ is unweighted graphs, can be computed in $\tilde{O}(T_{\mathcal{SQ}}+\mathcal{SQ}(G))$ time, where $\mathcal{SQ}(G)$ is a parameter called the shortcut quality of the graph, and $T_{\mathcal{SQ}}$ is the running time for constructing shortcuts with quality $\mathcal{SQ}(G)$. 
  The work \cite{DBLP:conf/icalp/ChuzhoyPT20} shows that shortcuts with quality $O(n/\lambda)$ can be constructed in $O(1)$ rounds in graphs with edge connectivity $\lambda$ (Theorem 7.10). A closer look at their proof shows that they only use the fact that the minimum degree in the graph is at least $\lambda$. Hence, the exact same proof shows that shortcuts with quality $O(n/\delta)$ can be constructed in $O(1)$ rounds in graphs with edge connectivity $\lambda$. Combining \cite{DBLP:conf/icalp/ChuzhoyPT20, ghaffari2022universally} gives an algorithm with running time $\tilde{O}(n/\delta)$ for computing %the value of
  $\lambda$. 
\end{proof}

If we aim for algorithms that take $\tilde{O}(n/\lambda)$ rounds, we can assume that all nodes initially know the values of $\delta$ and $\lambda$. As we will later see, the knowledge of $\lambda$ is not necessary for our broadcast algorithm to work, as we can use an exponential search to guess the value of $\lambda$.

\section{The Broadcast Algorithm}\label{sect:broadcast}

We first show a key lemma that random edge sampling with probability $p=C\log n/\lambda$ for a sufficiently large constant $C$ yields a spanning subgraph with diameter $O(n\log n/\delta)$.

\begin{lemma}\label{lem-mainlemma}
    Let $C = \Omega(1)$. Let $G=(V, E)$ be an $n$-node simple graph with edge connectivity $\lambda$ and minimum degree $\delta$. Let $G'=(V, E')$ be a  subgraph of $G$ where each edge $e\in E$ is included in $E'$ with probability $p=C\log n/\lambda$ independently, then $G'$ is a spanning subgraph with diameter $O(C n\log n/\delta)$ with probability $1 - n^{-\Omega(C)}$.
\end{lemma}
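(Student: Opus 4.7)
The plan has two ingredients: a Chernoff-based minimum-degree bound on $G'$, and the argument of \cref{lemma:diameter_vs_lambda} applied to $G'$ itself.

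For the spanning claim, I would show that with probability $1 - n^{-\Omega(C)}$, every vertex $v$ satisfies $\deg_{G'}(v) \geq p\delta/2$. Fix $v$; then $\deg_{G'}(v)$ is a sum of $\deg_G(v) \geq \delta$ independent $\mathrm{Bernoulli}(p)$ random variables with expectation at least $p\delta = C\delta\log n/\lambda \geq C\log n$ (using $\delta \geq \lambda$). A standard Chernoff bound yields $\Pr[\deg_{G'}(v) < p\delta/2] \leq e^{-\Omega(p\delta)} \leq n^{-\Omega(C)}$, and a union bound over the $n$ vertices gives $\delta_{G'} \geq p\delta/2$ with the claimed probability. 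In particular no vertex is isolated, so $G'$ is a spanning subgraph.

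For the diameter bound, I would replay the argument of \cref{lemma:diameter_vs_lambda} inside the simple graph $G'$. Pick any shortest $G'$-path $v_0, v_1, \ldots, v_{D'}$. The same ``distance-$3$'' reasoning shows that $N_{G'}(v_{3i}) \cap N_{G'}(v_{3j}) = \emptyset$ for $i \neq j$, so the disjoint union of these $G'$-neighborhoods lies in $V$. Using $|N_{G'}(v_{3i})| \geq \delta_{G'} \geq p\delta/2$ from the first part and summing over the $\Theta(D'/3)$ such indices yields $(D'/3)(p\delta/2) \leq n$, i.e. $D' \leq 6n/(p\delta) = 6n\lambda/(C\delta\log n)$.

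The main obstacle is matching this bound against the target $O(Cn\log n/\delta)$: the two expressions agree up to constants when $\lambda = O(\log^2 n)$, but the \cref{lemma:diameter_vs_lambda}-based bound is loose for denser graphs with larger $\lambda$. Closing this gap is the main technical challenge I would expect. A natural plan is to invoke Karger's cut-sampling theorem, which guarantees that every cut in $G'$ retains a constant fraction of $p$ times its $G$-value, and to combine it with a layered BFS argument that charges $O(\log n)$ $G'$-BFS steps per BFS layer of $G$. Together with the $O(n/\delta)$ bound on the number of $G$-BFS layers from \cref{lemma:diameter_vs_lambda} applied to $G$, this would deliver the claimed $O(Cn\log n/\delta)$ diameter bound universally.
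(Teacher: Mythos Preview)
Your first two steps are sound but, as you already note, they only recover the bound $O(n\lambda/(C\delta\log n))$, which falls short of $O(Cn\log n/\delta)$ once $\lambda \gg \log^2 n$. The real question is whether your proposed fix via Karger plus a layered BFS can close the gap, and here I do not see how to make it work. Karger's theorem tells you that every cut in $G'$ has size $\Omega(p)$ times its size in $G$, but cut preservation does not by itself control distances: the $\Omega(p)$-fraction of edges crossing $(B_t, V\setminus B_t)$ could all land on very few new vertices, so you cannot conclude that a $G'$-BFS ball grows multiplicatively per step. Your claim of ``$O(\log n)$ $G'$-BFS steps per $G$-BFS layer'' is exactly the hard statement, and Karger alone does not deliver it.

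The paper's approach sidesteps this by changing the unit of analysis. Instead of one round of $p$-sampling, it views the process as $L = \Theta(C\log n)$ independent rounds of $(1/\lambda)$-sampling, and tracks the set $B_j(v)$ reachable from $v$ within $j$ hops using edges revealed in the first $j$ rounds. The key point is that while $|B_j(v)| < \delta/4$, every node $u \in B_j(v)$ still has at least $3\delta/4$ edges in $G$ to nodes outside the current ball (this is where simplicity of $G$ is used), so in round $j{+}1$ such a $u$ acquires a fresh neighbor with probability at least $1 - (1-1/\lambda)^{3\delta/4} > 1/2$. Hence with probability at least $1/4$ the ball grows by a factor $4/3$, and after $L$ rounds $|B_L(v)| \geq \delta/4$ \whp. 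Then the packing argument of \cref{lemma:diameter_vs_lambda} is applied not to single vertices and their $G'$-neighborhoods, but to the disjoint $L$-balls $B_L(\cdot)$ along a shortest $G'$-path, yielding $\diam(G') = O(nL/\delta) = O(Cn\log n/\delta)$. The missing idea in your proposal is precisely this iterated-sampling view and the resulting $\Omega(\delta)$ lower bound on $L$-hop balls; your Chernoff bound on one-hop degrees is too coarse a substitute.
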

\begin{proof}
We view the $p$-probability sampling as $L=\Theta(C \log n)$ independent iterations of $q$-probability sampling for $q=1/\lambda$, where an edge is called sampled in the former if it is sampled in any of the iterations of the latter. Precisely, the relation is that $1-p = (1-q)^{L}$.

We reveal the edges sampled in the $L$ iterations and
argue that for each node $v$, its $L$-hop neighborhood through these edges has at least $\delta/4$ nodes \whp. We reveal these edges in $L$ iterations. For $j\in[L]$, let $B_{j}(v)$ be the set of nodes reached from $v$ up to distance $j$ using the edges in the first $j$ iterations of sampling. Next, we determine $B_{j+1}(v)$. For that, we examine the nodes in $B_{j}(v)$ one by one and consider their edges going outside the currently known $B_{j+1}(v)$. We say \textit{node $u$ fails in step $j+1$} if, when we reveal the sampling of this iteration for edges incident on $u$, there is no sampled edge in iteration $j+1$ that connects to a \textit{new node}, i.e., a node outside the currently known $B_{j+1}(v)$. If ever during the process, the current size of $B_{j+1}(v)$ is above $\delta/4$, we are done and know that we will have $|B_{L}(v)|\geq \delta/4$. Otherwise, each node $u\in B_{j}(v)$ fails with probability at most $(1-1/\lambda)^{3\delta/4}<1/2$, where the last inequality uses that $\delta\geq \lambda$. The probability bound holds because $u$ has at least $3\delta/4$ edges to nodes outside the current $B_{j+1}(v)$,\footnote{This part crucially assumes that $G$ is a simple graph; it would break for a multigraph.} and each of these is sampled with probability at least $1/\lambda$ in this iteration. Let us call the iteration an \textit{overall failure} if at least $2|B_j(v)|/3$ nodes fail. The probability of an overall failure is at most $3/4$, since otherwise the expected number of failed nodes in $B_j(v)$ exceeds $|B_{j}(v)|/2$. Moreover, if we are not in an overall failure situation, we have $|B_{j+1}(v)| \geq \min\{\delta/4, 4|B_{j}(v)|/3\}$. Since we always have $|B_{j+1}(v)| \geq |B_{j}(v)|$, given that $L=\Theta(C \log n)$, we conclude that $|B_{L}(v)|\geq \delta/4$ for all $v \in V$ with probability $1 - n^{-\Omega(C)}$ by a Chernoff bound for each $v \in V$ and a union bound over all $v \in V$.

We now use the above result to conclude that $\diam(G')\leq 20nL/\delta = \Theta(C n\log n/\delta)$ \whp by an argument similar to the proof of \cref{lemma:diameter_vs_lambda}. By Karger's result \cite{karger1999random} %~\cite{karger1993global} 
applied to $G'$, we know that $G'$ is connected with probability $1 - n^{-\Omega(C)}$.  Suppose for the sake of contradiction that $\diam(G')> 20nL/\delta$ and choose $v \in V$ and $u \in V$ such that $\dist_{G'}(v, u) > 20nL/\delta$. Let $P_{v,u}=(w_0, w_1, w_2, \dots, w_\ell)$ be a shortest path in $G'$ between $w_0=v$ and $w_\ell=u$, where $\ell=\dist_{G'}(v, u) > 20nL/\delta$. Let $W=\{w_{0}, w_{3L}, w_{6L}, \dots, w_{3L\cdot\lfloor{\ell/3L\rfloor}}\}$, and observe that $|W|\geq \lfloor \ell/3L\rfloor \geq 5n/\delta$.  Since $P_{v, u}$ is a shortest path, we know that the $L$-hop neighborhoods $B_{L}(w_k)$ of nodes $w_k\in W$, as defined above, must be disjoint; any intersection would imply a shorter path. From above we know that $|B_{L}(w_k)|\geq \delta/4$ for each $w_k\in W$. This implies that the number of nodes is at least $|W| \cdot \delta/4 = 5n/\delta \cdot \delta/4 =5n/4 > n$, which is a contradiction. Having arrived at this contradiction from the assumption that $\diam(G')> 20nL/\delta$, we conclude that $\diam(G')\leq 20nL/\delta = O(C n\log n/\delta)$.
\end{proof}

 Our key lemma can be seen as a strengthening of Karger's well-known connectivity under random edge sampling result~\cite{karger1999random}. Karger shows that if the sampling probability is $p=\Omega(\log n/\lambda)$, then the sample graph is connected \whp and thus has diameter at most $n-1$. We show that under the same condition, the diameter is indeed $\tilde{O}(n/\lambda)$ \whp.

\subsection{Low-Diameter Tree Packings}

\cref{thm:tree_packing} follows from \cref{lem-mainlemma} immediately.

\treepackinga*
\begin{proof}
The theorem follows from applying \cref{lem-mainlemma} to $G_i$  for all $i \in [\lambda']$. By a union bound over all $i \in [\lambda']$, the success probability is $1 - \lambda \cdot n^{-\Omega(C)} = 1 - n^{-\Omega(C)}$.
\end{proof}

\paragraph{Tree packings.}
By spending $O((n\log n)/\delta)$ rounds to do a BFS (\cref{lemma:construct_bfs_tree}) in parallel for all the edge-disjoint spanning subgraphs in \cref{thm:tree_packing}, we may obtain a \emph{tree packing} of $\Omega(\lambda / \log n)$ \emph{edge-disjoint} spanning trees with diameter $O((n\log n)/\delta)$.   

%\paragraph{Universal optimality.}
Such a tree packing can be viewed as a \emph{fractional} tree packing with total weight $\Omega(\lambda / \log n)$ by assigning the same unit weight to all trees in the tree packing.
In the regime of $k = \Omega(n)$, we can rewrite $\Omega(\lambda / \log n)$ as $\Omega\left(k / (\OPT \log n)\right)$, where $\OPT = \Omega(k/\lambda)$ is the optimal round complexity for broadcasting $k$ messages of the underlying graph $G$ when the topology of $G$ is known to the algorithm. 

By \cref{thrm:main,thm:universalLB}, we know that $\OPT$ is within an $O(\log n)$ factor of $k/\lambda$ in the regime of $k = \Omega(n)$, so the diameter of the tree packing $O((n\log n)/\delta)$ can be upper bounded by $O(\OPT \log n)$. Therefore, in the context of fractional tree packing, when $k = \Omega(n)$, \cref{thm:tree_packing} implies that a fractional tree packing with total weight $\Omega\left(k / (\OPT \log n)\right)$ and diameter $O(\OPT \log n)$ can be computed in $O(\OPT \log n)$ rounds. That is, in the regime of $k = \Omega(n)$, we show that a fractional tree packing with exactly the same parameters as that of Ghaffari~\cite{DBLP:conf/icalp/Ghaffari15} can be constructed in $O(\OPT \log n)$ rounds, addressing \cref{g1}. Furthermore, our approach only uses integral weights.

\paragraph{Lower bounds.} 
The parameters of our tree packings nearly match the \emph{existential} lower bounds from \cite{DBLP:conf/wdag/GhaffariK13}, which showed a family of graphs with $n$ nodes and diameter $O(\log{n})$ where in any tree packing the diameter of all trees is $\Omega(n/\lambda)$, except at most $O(\log{n})$ trees that may have a smaller diameter. Therefore, the diameter bound $O((n\log n)/\delta)$ in \cref{thm:tree_packing} is \emph{optimal} up to an $O(\log n)$ factor.
Their lower bound is for the case where all the trees are edge-disjoint, but the lower bound can also be extended to the more general case where the congestion is $O(\lambda/\log^4{n})$, i.e., each edge belongs to $O(\lambda/\log^4{n})$ trees. See \cref{appendix_lower_bound} for details.

\subsection{Broadcast} \label{sec:broadcast}

Our main contribution in this work is a near-optimal algorithm for the broadcast problem.

\broadcast*
\begin{proof}
Compute $\lambda' = \Omega(\lambda / \log n)$ edge-disjoint spanning subgraphs $G_1=(V, E_1)$, \ldots, $G_{\lambda'}=(V, E_{\lambda'})$ by \cref{thm:tree_packing}.
 Number the messages $1, 2, \dots, k$ via \cref{lemma:number_items} in $O(D) = O(n/\delta)$ rounds (\cref{lemma:diameter_vs_lambda}), and assign messages with numbers in $[(i-1)K+1, iK]$ to subgraph $G_i$, where $K=\lceil k/\lambda' \rceil$. Each subgraph $G_i$ gets $k_i=O(k/\lambda')=O((k\log n)/\lambda)$ messages assigned to it. Using \cref{very_old_k_broadcast}, we may broadcast these messages in $G_i$ in $O(\diam(G_i)+k_i)= O((n\log n)/\delta+(k\log n)/\lambda)$ rounds, in parallel for all $i \in [\lambda']$.
\end{proof}

\paragraph{Existential optimality.}
For all values of $k \leq n$, our algorithm nearly matches the \emph{existential} lower bound of Ghaffari and Kuhn~\cite{DBLP:conf/wdag/GhaffariK13} (\cref{lower_bound_broadcast}) which constructs a family of graphs where \[\Omega\left(D + \min\left\{ \frac{K}{\log^2{n}}, \frac{n}{\lambda} \right\}\right)\] rounds are needed to solve the easier \emph{unicast} problem, where a node $s$ should send $K$ bits of information to a node $t$. This implies a lower bound also for the harder \kbroadcast problem with $k = K/\log n$, where we should send $K=O(k \log{n})$ bits of information to all the nodes in the graph. For all values of $k$, we can nearly match this lower bound by combining \cref{thrm:main} with the $O(D+k)$-round textbook broadcast algorithm (\cref{very_old_k_broadcast}). By doing so, we infer that \kbroadcast  can be solved in \[\min\left\{O(D+k), O\left(\frac{n\log n}{\delta}+\frac{k\log n}{\lambda}\right)\right\} \ \ \text{rounds.}\]

\paragraph{Universal lower bound.}
We next show that in the regime where $k = \Omega(n)$, our broadcast algorithm is  \emph{universally optimal} in the sense that its round complexity is $O(((n+k)/\lambda)\log n) = O((k/\lambda) \log n) = O(\OPT \log n)$ for \emph{any graph} $G$, where $\OPT = \Omega(k / \lambda)$ is the round complexity of an optimal broadcast algorithm on $G$ that is designed especially for $G$ and knows the entire topology of $G$. 
As we will later see, $k = \Omega(n)$ is an interesting regime in that several fundamental problems can be reduced to \kbroadcast with $k = \Omega(n)$. Many graph problems, such as the computation of distances and cuts, admit sparsifiers, so approximate solutions for these problems can be obtained by first computing a sparsifier and then using \kbroadcast to let all nodes learn the sparsifier, where $k$ equals the number of edges in the sparsifier. After that, all nodes have enough information to approximately solve the problem locally.
We first formally define the notion of universal optimality.

\paragraph{Universal optimality.}

We follow the approach of~\cite{universally_optimal_stoc2021} to define a universally optimal algorithm. 
Informally, for a given problem $\Pi$, we say that an algorithm  $\mathcal{A}$ is \emph{universally optimal} if its complexity on any input instance is within a polylogarithmic factor of the complexity of the fastest algorithm $\mathcal{A}^\ast$ specifically designed for this input instance $s$.
In essence, a universally optimal algorithm achieves the best possible complexity on every input instance, up to a polylogarithmic factor. 

The formal definition of universal optimality is naturally problem-specific.
Given a problem $P=(S, I)$, split its input into a fixed part $S$ and a parametric input $I$. For example, for \kbroadcast, we fix the graph $G$ and the starting locations of all $k$ messages, yet the contents of the messages are arbitrary. For a given algorithm $\mathcal{A}$ solving $P$ and any possible state $s$ for $S$ and $i$ for $I$, denote by $t(a, s, i)$ the round complexity of $\mathcal{A}$ when run on $P$ with $S=s$ and $I=i$. An algorithm $\mathcal{A}$ is universally optimal w.r.t.~$P$ if, for any choice of $s$, the worst-case round complexity of $\mathcal{A}$ is at most that of the best algorithm $\mathcal{A}_s$ for solving $P$ which knows $s$ in advance, up to a polylogarithmic factor. Formally, for all possible $s$ and any algorithm $\mathcal{A}_s$, set $t(\mathcal{A}, s) = \max_{i}t(\mathcal{A}, s, i)$ and $t(\mathcal{A}_s) = \max_{i}t(\mathcal{A}_s, s, i)$, it holds that $t(\mathcal{A}, s) = \tilde{O}(t(\mathcal{A}_s))$. That is, one must fix a \emph{single} algorithm $\mathcal{A}$ that works for all $s$, yet $\mathcal{A}_s$ can be different for each $s$.

Concretely, for the \kbroadcast problem considered in this work, we not only allow $\mathcal{A}^*$ to know the entire graph topology but also allow $\mathcal{A}^*$ to know the initial positions of all $k$ messages to be broadcast. Our broadcast algorithm, which costs $O((n+k)/\lambda) \cdot \log n)$ rounds, is universally optimal in the regime $k = \Omega(n)$, as there is a simple information-theoretic $\Omega(k/\lambda)$ lower bound for any algorithm $\mathcal{A}^*$ that knows the graph topology and the initial distribution of the $k$ messages. We emphasize that our broadcast algorithm does not need to know the graph topology and the initial distribution of the $k$ messages.

Next, we demonstrate a straightforward $\Omega(k/\lambda)$ {lower bound} for the \kbroadcast problem. 
The lower bound is \emph{universal} in the sense that the lower bound applies to every graph $G$, and moreover, the lower bound has to hold even for algorithms that are tailor-made for $G$. In the lower bound, we do not even need to control the nodes which are the sources of the messages, but rather just that the message contents should be random bits.

\begin{theorem}[Universal lower bound for \kbroadcast]\label{thm:universalLB}
For any graph $G=(V,E)$, any value $k$, and any initial distribution of the $k$ messages, any algorithm that solves the \kbroadcast problem with probability at least $1/2$  requires $\Omega(k/\lambda)$ rounds, even if the graph topology and the initial distribution of the $k$ messages are known to the algorithm. 
\end{theorem}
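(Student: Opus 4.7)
The plan is to prove this by a standard information-theoretic min-cut argument. First I would fix a minimum edge cut $(A, B)$ of $G$, so that $|E(A,B)| = \lambda$. Since the $k$ message slots are distributed over $V = A \cup B$, one side holds at least $\lceil k/2 \rceil$ of them; without loss of generality let $S$ denote a set of at least $k/2$ slots sitting in $A$. I would then let the adversary fill the messages in $S$ with independent uniformly random $\Theta(\log n)$-bit strings $M_S$, giving $H(M_S) = \Omega(k \log n)$, while the remaining messages can have fixed contents.

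Next I would pick an arbitrary node $v \in B$. Since the algorithm solves $k$-broadcast with probability at least $1/2$, node $v$ must output every message in $S$ correctly with probability at least $1/2$. I would then argue that all information about $M_S$ reaching $v$ must travel across the $\lambda$ edges of $(A, B)$: the initial state of $B$ (its own messages, IDs, and private randomness) and any public randomness are independent of $M_S$, so $v$'s final output is a deterministic function of this $M_S$-independent data together with the transcript $Y$ of bits crossing the cut during the $T$ rounds. Since a round sends at most $O(\log n)$ bits per cut edge per direction, $H(Y) \leq O(T \lambda \log n)$.

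The final step is to apply Fano's inequality, conditioned on the public coins and on $B$'s initial state, to the decoder that recovers $M_S$ from $v$'s view with error probability at most $1/2$. This yields $I(M_S; Y) \geq H(M_S)/2 - O(1) = \Omega(k \log n)$, and combining with $I(M_S; Y) \leq H(Y) \leq O(T \lambda \log n)$ gives $T = \Omega(k/\lambda)$.

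The main obstacle I anticipate is formalizing the claim that all $M_S$-information reaching $B$ is captured by the cross-cut transcript $Y$. The clean way to handle this is a chain-rule induction on rounds: conditioning on the public randomness and on $B$'s initial data, the bits sent by $A$ on cut edges in round $t$ are a function of $M_S$, $A$'s private randomness, and what has crossed into $A$ in rounds $1, \ldots, t-1$, whereas everything sent to $v$ within $B$ is a deterministic function of $B$'s initial data, $B$'s randomness, and the received cross-cut bits. Everything else is bookkeeping, and the argument is robust to whether randomness is public or private; it also does not lose anything from granting the algorithm full knowledge of $G$ and of the slot assignment, since neither is used to lower-bound the cross-cut information.
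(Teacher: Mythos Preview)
Your proposal is correct and follows essentially the same min-cut information-theoretic argument as the paper: pick a minimum cut, observe that at least $k/2$ uniformly random messages lie on one side, and bound the information that can cross the $\lambda$ cut edges in $T$ rounds. The only cosmetic difference is that the paper finishes with a direct counting argument on the number of possible cross-cut transcripts versus the number of possible message tuples, whereas you invoke Fano's inequality; both formalizations yield the same $\Omega(k/\lambda)$ bound.
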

\begin{proof}
    %To each node in $K$ (with repetitions), assign a random string as a message. 
    Let each message be a uniformly random string of $s = O(\log n)$ bits.
    We select $S \subseteq V$  such that $|E(S, V\setminus S)|=\lambda$. Such a set exists by the definition of $\lambda$. It must be the case that either $S$ or $V \setminus S$ contains at least half of the messages initially. By symmetry, we assume that at least $k/2$ messages are initially in $V \setminus S$. In order to transmit these messages from $V \setminus S$ to $S$ across the  $\lambda$ cut edges $E(S, V\setminus S)$, this requires \[\Omega\left(\frac{sk/2}{|E(S, V\setminus S)| \log n}\right) = \Omega(k/\lambda)\] rounds of communication, as the communication bandwidth of an edge per round is $O(\log n)$.

    More formally, let $w = O(\log n)$ be the communication bandwidth of an edge per round, and let $t$ be the round complexity of the algorithm. 
    We claim that the success probability of the algorithm is smaller than $1/2$ if $2tw\lambda < (sk/2) -4$, so we must have $t = \Omega(k/\lambda)$.

    Suppose $2tw\lambda < (sk/2) -4$.
    Let $\mathcal{B}$ be the collection of all possible choices of the messages sent between $V \setminus S$ and $S$ throughout the algorithm. We have $|\mathcal{B}| =  2^{2tw\lambda} < 2^{(sk/2)-4}$.
     Let $\mathcal{M}$ be the collection of all possible choices of the messages that are initially in $V \setminus S$. We have $|\mathcal{M}| \geq 2^{sk/2}$. 
    We call each $m \in \mathcal{M}$   \emph{good} if there exists an element $b \in \mathcal{B}$ such that the nodes in $S$ correctly recover $m$ from $b$ with probability at least $1/4$, conditioning on $b$ being the outcome of the algorithm. The total number of good elements in $\mathcal{M}$ is at most $4|\mathcal{B}|$ because the events for recovering distinct $m \in \mathcal{M}$ are disjoint. Hence the success probability of the algorithm is at most
    
    \[\frac{1 \cdot 4|\mathcal{B}| + (1/4) \cdot |\mathcal{M}|}{|\mathcal{M}|}  < \frac{ (1/4) \cdot |\mathcal{M}|  + (1/4) \cdot |\mathcal{M}|}{|\mathcal{M}|}   \leq \frac{1}{2}. \qedhere\]    
\end{proof}

Since for $k = \Omega(n)$ the running time of our algorithm is $\tilde{O}(k/\lambda)$, it is universally optimal for $k = \Omega(n)$ by \cref{thm:universalLB}.

%In light of \cref{thm:universalLB}, if we can design an algorithm that solves the \kbroadcast problem in $\tilde{O}(k/\lambda)$ rounds, then such an algorithm is universally optimal. In this work, we show that this is doable for $k = \Omega(n)$.

\section{Applications} \label{sec:APSP_all}

In this section, we discuss applications of our broadcast algorithm for approximating distances and cuts in graphs. 
In \cref{sec:APSP} we present an unweighted approximate APSP algorithm.
In \cref{sec:wAPSP} we present an unweighted approximate APSP algorithm via spanner computation.
In \cref{sec:cuts} we present a cut approximation algorithm via spectral sparsifiers.
In \cref{sec:APSP_LB}, we demonstrate the optimality of our algorithms by showing nearly matching lower bounds.

\subsection{Unweighted APSP}\label{sec:APSP}

For unweighted APSP, we prove the following theorem.
%We describe our $(3,2)$-approximation for APSP, we show the following.

\begin{theorem} \label{thm:APSP}
    There is an algorithm that computes a $(3,2)$-approximation for unweighted APSP in $\tilde{O}(n/\lambda)$ rounds \whp 
\end{theorem}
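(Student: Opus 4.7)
The plan is to combine a constant-diameter clustering with the broadcast primitive of \cref{thrm:main} and a fast APSP algorithm run on a sharply shrunken cluster graph. First, I will build a clustering of $G$ with only $N = \tilde{O}(n/\delta)$ clusters, each of diameter at most $2$. The construction is to mark each node independently as a cluster center with probability $p = \Theta((\log n)/\delta)$ and let every node join its closest marked center. Because every node has at least $\delta$ neighbors, $(1-p)^\delta \le n^{-\Omega(1)}$, so by a union bound every node has a marked center in its closed neighborhood \whp; a Chernoff bound then yields $N = \tilde{O}(n/\delta)$ centers. The entire step costs $O(1)$ rounds.

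Next, I will run the Peleg--Roditty--Tal APSP algorithm of \cite{unweightedapsp} on the cluster graph $H$, whose vertex set is the $N$ clusters and whose edges join two clusters connected by an edge of $G$. Because clusters have constant diameter, one round of CONGEST on $H$ can be implemented in $O(1)$ rounds of CONGEST on $G$: assign each $H$-edge a dedicated witness $G$-edge and let each cluster use its internal BFS tree of depth $O(1)$ to shuttle bits between interior and boundary nodes. Since the PRT algorithm uses $\tilde{O}(N)$ rounds on an $N$-node graph, the simulation on $G$ costs $\tilde{O}(N) = \tilde{O}(n/\delta) \le \tilde{O}(n/\lambda)$ rounds, after which every cluster center knows (approximate) cluster distances to every other cluster.

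Third, I will distribute cluster-level information so every node can report approximate $G$-distances. Using \cref{thrm:main}, I broadcast the $n$ pairs $(v, C_v)$ in $\tilde{O}(n/\lambda)$ rounds; in parallel, each cluster center disseminates its PRT output to all members of its cluster along the internal BFS tree in $\tilde{O}(n/\delta)$ rounds. A node $v$ then outputs $\tilde{d}(v, u)$ by looking up $C_u$ and combining it with its own cluster-level knowledge. Correctness follows from the radius-$1$ sandwich $d_H(C_v, C_u) \le d_G(v, u) \le 3 d_H(C_v, C_u) + 2$ (upper bound by lifting a shortest $H$-path through cluster centers and witness edges, lower bound by projecting each $G$-edge onto $H$), which together with the PRT guarantee yields a $(3, 2)$-approximation of $d_G$.

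The main obstacle will be the $H$-to-$G$ simulation of PRT: a cluster whose center has many $H$-neighbors could become a congestion bottleneck if all outgoing messages were funneled through the center. The fix is to spread communication across the entire cluster boundary---each $H$-edge uses its own witness $G$-edge, and each cluster uses its BFS tree of depth $O(1)$ to shuttle bits between interior and boundary---so each $G$-edge transmits only $O(\log n)$ bits per round and the simulation overhead remains polylogarithmic. A secondary subtlety is that composing an approximate $d_H$ with the $H$-to-$G$ lifting can worsen the constants beyond $(3, 2)$; if needed, I instead compute \emph{exact} APSP on $H$, which is affordable since $|V(H)| = \tilde{O}(n/\delta)$, and then obtain the target $(3, 2)$ guarantee directly from the lifting.
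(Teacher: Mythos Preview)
Your overall approach coincides with the paper's: sample $\tilde{O}(n/\delta)$ centers, form radius-$1$ clusters, run PRT on the cluster graph $H$, broadcast cluster membership via \cref{thrm:main}, and output $3\,d_H(C_u,C_v)+2$. The approximation analysis is correct and matches the paper's \cref{lem:approx_ratio}.

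There is, however, a genuine gap in your simulation step. You assert that ``one round of \congest on $H$ can be implemented in $O(1)$ rounds of \congest on $G$'' via witness edges plus a depth-$O(1)$ BFS tree inside each cluster. This is false for generic \congest rounds. A center $c$ may have $N=\tilde{\Theta}(n/\delta)$ neighbors in $H$, and in a single $H$-round it may need to send or receive $N$ \emph{distinct} $O(\log n)$-bit messages. Your witness edges take care of the inter-cluster hop, but all $N$ messages still have to enter or leave the single node $c$ inside $G$, whose degree may be only $\delta$; that is an $\Omega(N/\delta)$-round bottleneck, not $O(1)$. Your ``spread communication across the boundary'' fix addresses congestion on the boundary edges but not at the center itself.

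The paper closes this gap by invoking a structural property of PRT that you never use: in every round, each node sends \emph{the same} message to all neighbors, and the $2\pi(u)$-delayed BFS starts guarantee that each node receives at most \emph{one} distinct message. Under that property the center floods a single outgoing word and collects a single incoming word per simulated round, which genuinely costs $O(1)$ rounds in $G$. You need to state and use this property explicitly; without it the simulation argument does not go through. (Your aside about falling back to ``exact APSP on $H$'' is moot, since PRT already computes exact unweighted distances.)
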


At a high level, our algorithm works as follows. First, we break the graph into $\tilde{O}(n/\delta)$ clusters of diameter $O(1)$ by choosing a random set of $\tilde{O}(n/\delta)$ centers and having each node join the cluster of a neighboring sampled node. As the minimum degree is $\delta$, \whp~each node indeed has a sampled neighbor.
Next, we run an APSP algorithm on the cluster graph. Since this graph has only $\tilde{O}(n/\delta)$ clusters, we can simulate an APSP algorithm on this graph in $\tilde{O}(n/\delta)$ rounds. Finally, to estimate the distance between two nodes $u$ and $v$, we use the distance between the cluster $C_u$ of $u$ and the cluster $C_v$ of $v$ in the cluster graph. If the distance between $C_u$ and $C_v$ is $k$, we estimate the distance between $u$ and $v$ with $3k+2$. We prove that this indeed gives a $(3,2)$-approximation. To implement the algorithm efficiently, all nodes should know the cluster $C_u$ for each node $u$. We can solve this task efficiently using our broadcast algorithm. We next describe the algorithm in detail.

\paragraph{Building a cluster graph.}

%\begin{proof}
Our algorithm is as follows. 
Let $p =  (c  \ln n)/\delta$, where $c > 0$ is some sufficiently large constant. Each node samples itself to be a \emph{center} with probability $p$ independently.
Let $k$ be the total number of centers.
By a Chernoff bound,  $k = \tilde{O}(n/\delta)$ \whp.
In the subsequent discussion, we write $\{c_1, \ldots, c_k\}$ to denote the set of all centers. 
    
    Next, we prove that if the minimum degree is at least $\delta$, then \whp for each node $v$, at least one of its neighbors lies in $\{c_1, \ldots, c_k\}$. For every node $u$, the probability that it is chosen as a center is $p = (c \ln n)/\delta$. Since each node $v$ has at least $\delta$ neighbors, the probability that no neighbor of $v$ is chosen as a center is at most $\left(1 -  (c \ln n)/\delta\right)^\delta \leq e^{-c \ln n} = n^{-c}$. By a union bound over all nodes, we conclude that with probability at least $1 - n^{1-c}$, for each node in the graph, at least one of its neighbors lies in $\{c_1, \ldots, c_k\}$.
    
    Every node $v \in V \setminus  \{c_1, \ldots, c_k\}$ selects an arbitrary neighbor $s(v)$ such that $s(v) \in \{c_1, \ldots, c_k\}$. Every center $c_i$ selects itself in that $s(c_i) = c_i$. For each center $c_i$, we write $C_i$ to denote the set of nodes $v$ with $s(v) = c_i$. Now, imagine a virtual graph $G_c$ over the nodes $\{c_1, \ldots, c_k\}$, where there is an edge between $c_i$ and $c_j$ if there exist two nodes $v_1$ and $v_2$ such that $\{ v_1,v_2 \} \in E$,  $c_i = s(v_1)$, and $c_j = s(v_2)$. 
    %\newline

    \paragraph{Computing APSP on the cluster graph.}

    We next compute APSP on the cluster graph.
    %, we prove the following. 

    \begin{lemma} 
    \label{apsp-cluster}
    We can solve the APSP problem on $G_c$ in $\tilde{O}(n/\delta)$ rounds of communication over $G$. 
    \end{lemma}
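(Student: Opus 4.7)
The plan is to simulate a linear-time unweighted APSP algorithm, such as the Peleg--Roditty--Tal procedure~\cite{unweightedapsp}, on the cluster graph $G_c$. Since $G_c$ has only $k = \tilde{O}(n/\delta)$ nodes, such an algorithm completes in $O(k) = \tilde{O}(n/\delta)$ rounds of CONGEST on $G_c$, so the task reduces to emulating each round of $G_c$-CONGEST in $O(1)$ rounds of $G$-CONGEST on top of a cheap preprocessing phase.

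The observation enabling this low-overhead emulation is that every cluster $C_i$ has radius one around its center $c_i$, so intra-cluster broadcast and aggregation each take a single $G$-round. Concretely, I would simulate one $G_c$-round using four $G$-rounds: (i) $c_i$ sends its outgoing token to each member $v \in C_i$ along the direct $c_i$-$v$ edge; (ii) every $v \in C_i$ forwards the token across each of its cross-cluster edges; (iii) every $u \in C_j$ on the receiving side locally aggregates the tokens just received from its cross-cluster neighbors to the highest-priority one according to PRT's rule; (iv) every such $u$ forwards its aggregate to $c_j$, which in turn aggregates across the $|C_j|$ incoming messages. A preliminary one-round exchange of $s$-values between $G$-neighbors lets every cluster member enumerate its cross-cluster edges and the adjacent clusters, which is all the information required to bootstrap the simulation.

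The main obstacle will be verifying that this aggregation-based emulation faithfully preserves the $O(k)$-round guarantee of PRT: because intermediate cluster members forward only the priority-maximum of their incoming tokens each round rather than every received token individually, one must check that PRT's per-round transition at each node can be phrased purely as an aggregation over its $G_c$-neighborhood, or else amend the simulation to re-forward leftover tokens in subsequent rounds (which by itself does not affect the asymptotic round count as long as each bottleneck edge carries at most $O(k)$ tokens over the whole execution). Granted this verification, combining the $O(1)$ per-round emulation overhead with the $O(k)$ PRT rounds on $G_c$ yields the desired $\tilde{O}(n/\delta)$ round complexity on $G$.
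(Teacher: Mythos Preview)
Your proposal follows essentially the same route as the paper: simulate the Peleg--Roditty--Tal algorithm on $G_c$ with $O(1)$ overhead per $G_c$-round, exploiting the radius-one structure of each cluster, and pay $\tilde{O}(n/\delta)$ total. The paper resolves the ``main obstacle'' you flag more directly than via aggregation: PRT performs only a DFS followed by BFS waves launched with delays $2\pi(u)$, which guarantees both that every $G_c$-node sends the \emph{same} message to all neighbors in any given round and that no $G_c$-node is reached by two BFS waves in the same round---hence each cluster member has at most one distinct token to relay to its center, and no aggregation or re-forwarding is needed. One minor point: your ``one-round'' preprocessing lets cluster \emph{members} learn their cross-cluster edges, but the DFS phase needs each \emph{center} to know its $G_c$-neighborhood; the paper spends $O(k)=\tilde{O}(n/\delta)$ rounds to convergecast these (up to $k$ distinct) cluster IDs to the center, which still fits the budget.
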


    \begin{proof}
    Observe that every center $c_i$ can learn all of its neighbors in $G_c$ in $O(k) = \tilde{O}(n/\delta)$ rounds, as follows. First, let each node $v$ broadcast $s(v)$ to all its neighbors, and then for each cluster $C_i$, let $c_i$ gather all the messages sent to $C_i$. The round complexity is $O(k)$ since the number of distinct messages is at most $k$.
    
    After learning this information, we simply simulate the APSP algorithm of Peleg, Roditty, and Tal~\cite{unweightedapsp}. Their algorithm first performs a depth-first search from an arbitrary node, obtaining timestamps $\pi(u)$ denoting when the search first reached node $u$. Then, every node begins a breadth-first search with a delay equal to $2\cdot \pi(u)$. They prove that this ensures that no node $v$ is reached by the breadth-first searches of two different nodes at the same time. 
    Also, since we only perform depth-first or breadth-first searches, we have the property that any node $u$ always sends the same message to every neighbor $v$ in a particular round. Then, observe that we can clearly simulate $c_i$ sending the same message $M_i$ to all of its neighbors $c_{j_1}, \ldots, c_{j_l}$ using $3$ rounds in the original graph $G$ as follows: $c_i$ sends $M_i$ to all nodes in its cluster, then those nodes send it to all their neighbors from a different cluster, and finally those nodes send $M_i$ to their centers. By the properties mentioned earlier, we will have that any node $u$ will always have at most one distinct message to send to its center, and hence we are done. 
    \end{proof} 

    \paragraph{Estimating the distances.} We next show that the distances computed can help us estimate all the distances in the graph. We denote by $d_G(u,v)$ the distance between $u$ and $v$ in the graph $G$.

    \begin{lemma}\label{lem:approx_ratio}
         Define $d(u,v) = d_G(u,v)$ and $d'(u,v) = 3 \cdot d_{G_c}(s(u),s(v)) + 2$, then $d'$ is a $(3,2)$-approximation for APSP in $G$. 
    \end{lemma}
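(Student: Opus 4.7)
The plan is to establish the two inequalities $d(u,v) \leq d'(u,v)$ and $d'(u,v) \leq 3\,d(u,v) + 2$ separately, using the structural property that every cluster $C_i$ has radius at most $1$ around its center $c_i$ (by construction, every non-center node $v$ is adjacent to $s(v)$).

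For the lower-bound direction $d(u,v) \leq d'(u,v)$, I would simulate a shortest $G_c$-path in $G$. Let $k = d_{G_c}(s(u), s(v))$ and fix a shortest $G_c$-path $s(u) = c_{i_0}, c_{i_1}, \dots, c_{i_k} = s(v)$. By the definition of $G_c$, for each consecutive pair $c_{i_j}, c_{i_{j+1}}$ there exist $x_j \in C_{i_j}$ and $y_j \in C_{i_{j+1}}$ with $\{x_j, y_j\} \in E$, so we can traverse this hop in $G$ by the length-$3$ path $c_{i_j} \to x_j \to y_j \to c_{i_{j+1}}$ (using that $x_j$ and $y_j$ are each adjacent to their centers). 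Concatenating these and prepending $u \to c_{i_0}$ and appending $c_{i_k} \to v$ (each of length at most $1$) yields a $u$-to-$v$ walk in $G$ of length at most $1 + 3k + 1 = 3k + 2 = d'(u,v)$.

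For the upper-bound direction $d'(u,v) \leq 3\,d(u,v) + 2$, I would project a shortest $G$-path onto $G_c$. Fix a shortest path $u = w_0, w_1, \dots, w_d = v$ in $G$ where $d = d(u,v)$. For any consecutive pair $w_i, w_{i+1}$, either $s(w_i) = s(w_{i+1})$ (same cluster) or $s(w_i)$ and $s(w_{i+1})$ are adjacent in $G_c$ (by the very definition of $G_c$, witnessed by the edge $\{w_i, w_{i+1}\}$). Thus the sequence $s(w_0), s(w_1), \dots, s(w_d)$ is a walk in $G_c$ from $s(u)$ to $s(v)$ of length at most $d$, so $d_{G_c}(s(u), s(v)) \leq d(u,v)$, and hence $d'(u,v) = 3\, d_{G_c}(s(u), s(v)) + 2 \leq 3\,d(u,v) + 2$.

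I do not expect any serious obstacle: the two halves are essentially a matter of bounding the cost of a hop in each direction by the structural property above. The only mildly subtle point is the $+2$ additive slack in the lower-bound direction, which comes from the two cluster-radius hops at the endpoints and is what forces us to use an additive rather than purely multiplicative approximation; this also handles the boundary case $u = v$ (where $d'(u,v) = 2 \geq 0 = d(u,v)$) without needing special treatment.
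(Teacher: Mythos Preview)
The proposal is correct and follows essentially the same argument as the paper: both directions are handled by translating paths between $G$ and $G_c$, using that each $G_c$-edge corresponds to a path of length at most~$3$ in $G$ and that consecutive vertices on a $G$-path have centers that are equal or adjacent in $G_c$. Your write-up is slightly more explicit (constructing the walk $s(w_0),\dots,s(w_d)$ directly rather than phrasing it as a contradiction), but the content is the same.
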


    \begin{proof}
        We should prove that $d(u,v) \leq d'(u,v) \leq 3d(u,v)+2$. We start by proving the left inequality.
        This follows from the fact that there is a path between $u$ and $v$ of length at most $3 \cdot d_{G_c}(s(u),s(v)) + 2$, by going from $u$ to $s(u)$, from $s(u)$ to $s(v)$ and then from $s(v)$ to $v$. The first and last edges add an additive 2 to the approximation, where $d_G(s(u),s(v)) \leq 3 \cdot d_{G_c} (s(u),s(v))$ because any virtual path in $G_c$ can be converted to a path in $G$ by replacing every virtual edge with at most $3$ edges in $G$. This shows that $d(u,v) \leq 3 \cdot d_{G_c}(s(u),s(v)) + 2$, as needed.

        To complete the proof, we should show that $3 \cdot d_{G_c}(s(u),s(v)) + 2 \leq 3d(u,v) +2$, this follows from the fact that $d_{G_c}(s(u),s(v)) \leq d(u,v)$, which can be proved as follows. Assume to the contrary that $d(u,v) < d_{G_c} (s(u),s(v))$, then we could have used the shortest path between $u$ and $v$, which changes clusters at most $d(u,v)$ times, to obtain a path from $s(u)$ to $s(v)$ in $G_c$ with length at most $d(u,v)$, which is a contradiction to the definition of $d_{G_c}(s(u),s(v))$. 
    \end{proof}

\paragraph{Putting everything together.} Based on the above ingredients, we can now prove \cref{thm:APSP}.

\begin{proof}[Proof of \cref{thm:APSP}]
We start by building the virtual graph $G_c$ in one round: All the sampled nodes announced to their neighbors that they are centers, and then each node $v$ can select $s(v)$ locally.
    We then solve the APSP problem on $G_c$ in $\tilde{O}(n/\delta)$ rounds using \cref{apsp-cluster}. Next, every center $c_i$ can simply broadcast $d_{G_c}(c_i,c_j)$ for all $j \in [k]$ to every node $v \in C_i$ in $O(k) = \tilde{O}(n/\delta)$ rounds. Also, every node $v$ can broadcast $s(v)$ to all other nodes $u$ using the broadcast algorithm from \cref{thrm:main} in $\tilde{O}(n/\lambda)$ rounds. Finally, using these two pieces of information, every node $v$ is then locally able to compute $d'(u,v)$, which is a $(3,2)$-approximation by \cref{lem:approx_ratio}. 
    %This completes the proof of \cref{thm:APSP}.
\end{proof}

\subsection{Weighted APSP}\label{sec:wAPSP}

For weighted APSP, we prove the following results.

\begin{restatable}{theorem}{weightedAPSP}\label{thm:weightedAPSP}
    For any integer $k \geq 1$, there is an algorithm that computes a $(2k-1)$-approximation for weighted APSP in $\tilde{O}\left(n^{1+\frac{1}{k}}/\lambda\right)$ rounds \whp     
\end{restatable}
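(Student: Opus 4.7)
The plan is to follow the standard \emph{spanner-then-broadcast} strategy: compute a sparse multiplicative spanner $H$ of $G$ that preserves distances up to a factor $(2k-1)$, then use Theorem~\ref{thrm:main} to disseminate the entire edge list of $H$ to every node, after which every node can locally compute approximate distances on $H$.

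First, I would invoke the distributed Baswana--Sen algorithm~\cite{weightedspanner}, which in $O(k^2)$ \congest rounds constructs, \whp, a subgraph $H \subseteq G$ with $\tilde{m} = O\!\left(k \cdot n^{1+1/k}\right)$ edges such that $d_G(u,v) \leq d_H(u,v) \leq (2k-1) \cdot d_G(u,v)$ for all $u, v \in V$. At the end of this stage, each node knows which of its incident edges are in $H$, together with their weights.

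Second, I would broadcast $H$ to every node using our broadcast algorithm. For each spanner edge $e = \{u,v\}$ with weight $w(e)$, the endpoint with larger ID creates a message containing $(\ID(u), \ID(v), w(e))$; assuming polynomially bounded weights, this fits in $O(\log n)$ bits. There are $\tilde{m}$ such messages in total. Using \cref{lemma:number_items} to assign them unique identifiers in $O(D)=O(n/\delta)$ rounds and then applying \cref{thrm:main} with $k_{\text{bcast}} = \tilde{m}$, the broadcast completes in
\[
O\!\left(\frac{n \log n}{\delta} + \frac{\tilde{m}\log n}{\lambda}\right) \;=\; \tilde{O}\!\left(\frac{n^{1+1/k}}{\lambda}\right)
\]
rounds \whp, using $\delta \geq \lambda$ and the fact that $\tilde{m} = \tilde{O}(n^{1+1/k})$.

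Third, once every node holds the full edge list of $H$, each node locally runs any centralized weighted APSP algorithm on $H$ and outputs $\tilde{d}(u,v) := d_H(u,v)$ for every pair involving itself, which by construction is a $(2k-1)$-approximation of $d_G(u,v)$. Summing the costs gives $O(k^2) + \tilde{O}(n^{1+1/k}/\lambda) = \tilde{O}(n^{1+1/k}/\lambda)$, as the interesting regime is $k = O(\log n)$ and $n^{1+1/k}/\lambda \geq n/\lambda$ dominates. There is no substantive obstacle; the only routine concern is packaging each spanner edge into an $O(\log n)$-bit message, which is standard under the usual \congest assumption that edge weights are polynomially bounded.
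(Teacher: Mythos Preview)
Your proposal is correct and follows essentially the same approach as the paper: compute a Baswana--Sen $(2k-1)$-spanner with $O\!\left(k\cdot n^{1+1/k}\right)$ edges in $O(k^2)$ rounds, broadcast it via \cref{thrm:main}, and compute distances locally. The only minor addition in the paper is the explicit observation that for $k=\omega(\log n)$ one may replace $k$ by $\Theta(\log n)$ (which only improves the approximation ratio while keeping $\tilde m=\tilde O(n)$), thereby absorbing the $O(k^2)$ term cleanly into $\tilde{O}\!\left(n^{1+1/k}/\lambda\right)$; you allude to this but do not spell it out.
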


The proof of \cref{thm:weightedAPSP} is based on constructing a $(2k-1)$-spanner with  $\tilde{m} = O\left(k \cdot n^{1+\frac{1}{k}}\right)$ edges and broadcasting it to the network using our broadcast algorithm. %for details see \cref{sec:appendix_APSP}.

\begin{proof}
An $\alpha$-\emph{spanner} of a graph $G$ is a subgraph $H$ such that $d_H(u,v) \leq \alpha \cdot d_G(u,v)$ for all nodes $u$ and $v$ in $G$.
We show how to obtain a $(2k-1)$-approximation for weighted APSP using our broadcast algorithm. First, we use the Baswana--Sen algorithm~\cite{weightedspanner} to obtain a $(2k-1)$-spanner with  $\tilde{m} = O\left(k \cdot n^{1+\frac{1}{k}}\right)$ edges in $O(k^2)$ rounds. We then simply use our broadcast algorithm of \cref{thrm:main} to let all nodes in the graph learn the entire spanner, from which they can calculate a $(2k-1)$-approximation for APSP. Learning the spanner requires broadcasting $\tilde{m} = O\left(k \cdot n^{1+\frac{1}{k}}\right)$ messages. By \cref{thrm:main}, the overall round complexity of our algorithm is\[O(k^2) + \tilde{O}(\tilde{m}/\lambda) = O(k^2) + \tilde{O}\left(k \cdot n^{1+\frac{1}{k}}/\lambda\right).\]
If $k = O(\log n)$, then the above round complexity can be simplified to $\tilde{O}\left(n^{1+\frac{1}{k}}/\lambda\right)$. Otherwise, we may replace $k$ with $O(\log n)$ to attain the desired round complexity $\tilde{O}\left(n^{1+\frac{1}{k}}/\lambda\right)$, and this change of variable improves the approximation ratio.
 \end{proof}

%As a special case of \cref{thm:weightedAPSP}, we obtain the following corollary. 

 \begin{corollary}\label{cor:weightedAPSP}
     There is an algorithm that computes an $O\left(\frac{\log n}{\log \log n}\right)$-approximation for weighted APSP in $\tilde{O}(n/\lambda)$ rounds \whp 
 \end{corollary}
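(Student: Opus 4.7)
The plan is to derive this corollary as a direct parameter choice in \cref{thm:weightedAPSP}. Specifically, I would set $k = \lceil \log n / \log \log n \rceil$ and invoke the theorem. The approximation ratio is then $2k - 1 = O(\log n / \log \log n)$, which matches the claim, so the only thing that needs verification is that the round complexity $\tilde{O}\!\left(n^{1+1/k}/\lambda\right)$ collapses to $\tilde{O}(n/\lambda)$ for this choice of $k$.

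For the round complexity, the main calculation is to show $n^{1/k} = O(\log n)$ for $k = \lceil \log n / \log \log n \rceil$. Writing $n^{1/k} = 2^{(\log n)/k}$ and plugging in $k \geq \log n / \log \log n$, we get $(\log n)/k \leq \log \log n$, so $n^{1/k} \leq 2^{\log \log n} = \log n$. Thus $n^{1+1/k} \leq n \log n$, and the round complexity becomes $\tilde{O}(n \log n / \lambda) = \tilde{O}(n/\lambda)$, as the extra $\log n$ is absorbed by the $\tilde{O}$ notation.

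The only minor subtlety is that \cref{thm:weightedAPSP} requires $k$ to be an integer, but taking the ceiling only strengthens $k$ and therefore only improves the approximation ratio while preserving the bound on $n^{1/k}$. There is no main obstacle here — the corollary is essentially a calculation, with the real content residing in \cref{thm:weightedAPSP} (which in turn relies on the Baswana--Sen spanner construction and our broadcast algorithm \cref{thrm:main}).
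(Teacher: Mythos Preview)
Your proposal is correct and matches the paper's proof exactly: the paper also derives the corollary from \cref{thm:weightedAPSP} by setting $k = \lceil \log n / \log\log n \rceil$ and noting that $\tilde{O}(n^{1+1/k}/\lambda) = \tilde{O}(n/\lambda)$ for this choice. Your write-up is in fact more detailed than the paper's one-line justification.
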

 \begin{proof}
 This follows from \cref{thm:weightedAPSP} with $k = \left\lceil \frac{\log n}{\log \log n} \right\rceil$, in which case $\tilde{O}\left(n^{1+\frac{1}{k}}/\lambda\right) = \tilde{O}(n/\lambda)$.
 \end{proof}

\subsection{Cuts}\label{sec:cuts}

We use the algorithm of Koutis and Xu \cite{koutis2016simple} to approximate all the cuts in the graph.
For a set of nodes $S$, let $\operatorname{cut}_G(S) = \sum_{u \in S, v \in V \setminus S} w(u,v)$.
Koutis and Xu~\cite{koutis2016simple} showed an algorithm for constructing a spectral sparsifier, that in particular has the following implication. 

\begin{theorem}[Koutis--Xu~\cite{koutis2016simple}]\label{thm:spectral}
    Consider a graph $G = (V, E, w)$. There exists a distributed algorithm in \congest such that for any $\epsilon > 0$ outputs a graph $H = (V, E', w')$ in $\tilde{O}(1/\epsilon^2)$ rounds satisfying the following conditions.
\begin{enumerate}
    \item $(1-\epsilon) \operatorname{cut}_H(S) \leq \operatorname{cut}_G(S) \leq (1+\epsilon)\operatorname{cut}_H(S)$ for any $S \subseteq V$.
    \item The number of edges in $H$ is $\tilde{O}(n/\epsilon^2)$.
\end{enumerate}
\end{theorem}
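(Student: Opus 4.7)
The plan is to invoke the spectral sparsifier construction of Koutis and Xu~\cite{koutis2016simple} essentially as a black box, and then observe that the cut-preservation guarantee stated here is an immediate corollary of the stronger spectral guarantee.

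Specifically, their algorithm produces a weighted subgraph $H=(V,E',w')$ satisfying $(1-\epsilon) L_H \preceq L_G \preceq (1+\epsilon) L_H$ in the positive semidefinite order, where $L_G$ and $L_H$ denote the graph Laplacians. For any subset $S\subseteq V$ and the indicator vector $\mathbf{1}_S\in\{0,1\}^V$, a direct computation gives $\mathbf{1}_S^{\top} L_G \mathbf{1}_S = \operatorname{cut}_G(S)$ and likewise for $H$. Instantiating the spectral inequality at $x=\mathbf{1}_S$ therefore yields
\[
(1-\epsilon)\operatorname{cut}_H(S) \;\leq\; \operatorname{cut}_G(S) \;\leq\; (1+\epsilon)\operatorname{cut}_H(S)
\]
for every $S\subseteq V$, which is the first claim.

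For the edge-count bound and the round complexity, I would simply appeal to the analysis in~\cite{koutis2016simple}. Their algorithm proceeds in $O(\log n)$ iterations; each iteration constructs a low-stretch spanner, uses it to estimate an importance score for each off-spanner edge, and keeps each such edge independently with probability proportional to its score (rescaling the weight accordingly). A matrix Chernoff argument shows that the resulting subgraph has $\tilde{O}(n/\epsilon^2)$ edges and is a $(1\pm\epsilon)$ spectral sparsifier \whp. Each iteration is executed in $\tilde{O}(1/\epsilon^2)$ \congest rounds using standard spanner primitives, and the total round complexity is $\tilde{O}(1/\epsilon^2)$.

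If one wanted a fully self-contained proof, the main obstacle would be reproducing the matrix Chernoff step: one must argue that sampling according to importance scores derived from a \emph{distributed} spanner (rather than exact effective resistances) still yields the spectral inequality with high probability. This is the technical core of~\cite{koutis2016simple}, and I would simply cite it rather than rederiving it here.
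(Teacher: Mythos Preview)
Your proposal is correct and matches the paper's approach: the paper does not prove this theorem at all but simply cites it as a black-box result from Koutis--Xu (noting, as you do, that their spectral guarantee is stronger than the cut guarantee stated here). Your additional remark that cut preservation follows by evaluating the Laplacian quadratic form at indicator vectors is standard and correct, and in fact gives slightly more detail than the paper itself provides.
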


The above theorem statement is adapted from \cite{DBLP:conf/wdag/AnagnostidesG21}, which also used the algorithm of Koutis and Xu~\cite{koutis2016simple} to approximate cuts. The Koutis--Xu algorithm~\cite{koutis2016simple} is more general and builds a spectral sparsifier.
Combining  \cref{thm:spectral,thrm:main}, we obtain the following result.

\begin{theorem}\label{thm:cut_approx}
There is an algorithm that estimates all the values $\operatorname{cut}_G(S)$ up to a $(1+\epsilon)$ factor in $\tilde{O}\left(n/(\lambda \epsilon^2)\right)$ rounds \whp    
\end{theorem}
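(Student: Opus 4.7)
The plan is to combine the spectral sparsifier construction of Koutis and Xu (\cref{thm:spectral}) with our broadcast algorithm (\cref{thrm:main}). First, I would invoke \cref{thm:spectral} with the given parameter $\epsilon$ to build a sparsifier $H = (V, E', w')$ having $|E'| = \tildeBigO{n/\epsilon^2}$ edges in $\tildeBigO{1/\epsilon^2}$ rounds. After this step, each edge of $H$ together with its weight is known to both of its endpoints.

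Next, I would treat each edge of $H$, encoded as the pair of endpoint IDs together with its associated weight (which fits in $O(\log n)$ bits under the standard \congest convention of polynomially bounded weights), as a single $O(\log n)$-bit message. Since both endpoints of each edge in $H$ know the edge, the hypothesis of \kbroadcast is satisfied (with a trivial tie-breaking rule selecting a unique owner per edge, e.g. the lower-ID endpoint). Applying \cref{thrm:main} with $k = \tildeBigO{n/\epsilon^2}$ messages, every node in $V$ learns the entire sparsifier $H$ in $O(((n+k)/\lambda)\log n) = \tildeBigO{n/(\lambda \epsilon^2)}$ rounds \whp

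Finally, once each node holds a complete description of $H$, it can locally evaluate $\operatorname{cut}_H(S)$ for any subset $S \subseteq V$, and by the first guarantee of \cref{thm:spectral}, the resulting value is a $(1+\epsilon)$-approximation of $\operatorname{cut}_G(S)$. The total round complexity is dominated by the broadcast step, yielding the claimed $\tildeBigO{n/(\lambda \epsilon^2)}$ bound. I do not anticipate any substantive obstacle here: the only two things worth sanity-checking are (i) that the sparsifier's edge weights are representable in $O(\log n)$ bits, which follows from the standard assumption on input weights, and (ii) that the number of sparsifier edges $\tildeBigO{n/\epsilon^2}$ is large enough that the $(n+k)/\lambda$ term in the broadcast complexity is dominated by $k/\lambda$ (which is automatic) while remaining small enough that a single $\log n$ factor from \cref{thrm:main} is absorbed into the $\tilde{O}(\cdot)$ notation.
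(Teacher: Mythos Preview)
Your proposal is correct and follows essentially the same approach as the paper: construct the Koutis--Xu sparsifier via \cref{thm:spectral}, broadcast its $\tilde{O}(n/\epsilon^2)$ edges using \cref{thrm:main}, and then have every node locally evaluate the cut values in $H$. The additional details you spell out (message encoding, owner selection, weight size) are fine and not even mentioned in the paper's shorter proof.
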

\begin{proof}
    Since the sparsifier $H$ of \cref{thm:spectral} has $\tilde{O}(n/\epsilon^2)$ edges, we can broadcast it to the whole graph in $\tilde{O}\left(n/(\lambda \epsilon^2)\right)$ rounds using the broadcast algorithm of \cref{thrm:main}. After that, all nodes can estimate all the values $\operatorname{cut}_G(S)$ up to a $(1+\epsilon)$ factor.
\end{proof}

\subsection{Lower Bounds}\label{sec:APSP_LB}

Our algorithms in \cref{thm:APSP,thm:cut_approx,cor:weightedAPSP} are  \emph{universally optimal} in the following sense: To write down the estimates of all distances or all cut sizes, it is \emph{necessary} to first learn the list of all IDs in the graph, and there is a simple information-theoretic  $\Omega(n / \lambda)$ universal lower bound for learning the list of all IDs.

\begin{theorem}[Universal lower bound for learning IDs]\label{thm:universalLB2}
Let $G=(V,E)$ be any graph with edge connectivity $\lambda$. If $\{\ID(v) \, | \, v \in V\}$ is chosen as a uniformly random subset of $[n^c]$ for some constant $c > 1$, then it requires $\Omega(n / \lambda)$ rounds for all nodes in $G$ to learn the list of all identifiers with probability at least $1/2$.
\end{theorem}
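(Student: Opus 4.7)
The plan is to adapt the cut argument used in \cref{thm:universalLB} to the task of learning the random identifiers. I would fix any minimum edge cut $(S,V\setminus S)$ with $|E(S,V\setminus S)|=\lambda$, which exists by definition of $\lambda$. By symmetry, assume $|V\setminus S|\geq n/2$. A node $u\in S$ only learns about the identifiers of nodes in $V\setminus S$ through the messages crossing the cut, since initially each node knows only its own identifier and the rest of its inputs are determined by the topology of $G$ (which is fixed) and its local randomness (which is independent of the identifiers in $V\setminus S$).

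Next I would quantify the information that must be conveyed. Conditioning on the identifiers assigned to nodes in $S$, the multiset $M$ of identifiers in $V\setminus S$ is a uniformly random subset of size $|V\setminus S|\geq n/2$ drawn from the remaining $n^c-|S|\geq n^c-n$ elements of $[n^c]$. Because $c>1$, the entropy of $M$ satisfies
\[
H(M)\;=\;\log\binom{n^c-|S|}{|V\setminus S|}\;\geq\;\tfrac{n}{2}\log\!\Big(\tfrac{2(n^c-n)}{n}\Big)\;=\;\Omega(n\log n).
\]
On the other hand, in $t$ rounds the total number of bits that can cross the cut is at most $t\cdot\lambda\cdot O(\log n)$, since each of the $\lambda$ cut edges carries $O(\log n)$ bits per round in each direction.

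I would then convert this entropy gap into a probability bound by essentially the same counting argument used in the proof of \cref{thm:universalLB}: let $\mathcal{B}$ be the set of possible transcripts of messages sent across the cut, so $|\mathcal{B}|\leq 2^{O(t\lambda\log n)}$, and let $\mathcal{M}$ be the set of possible identifier assignments for $V\setminus S$, so $|\mathcal{M}|=\binom{n^c-|S|}{|V\setminus S|}=2^{\Omega(n\log n)}$. For each transcript, the nodes in $S$ output a single guess for $M$, so the number of identifier assignments that the algorithm can recover with probability at least $1/4$ is at most $4|\mathcal{B}|$. The overall success probability is therefore bounded by
\[
\frac{4|\mathcal{B}|+(1/4)|\mathcal{M}|}{|\mathcal{M}|},
\]
which is strictly less than $1/2$ whenever $t\lambda\log n=o(n\log n)$, forcing $t=\Omega(n/\lambda)$.

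The only mildly delicate step will be justifying that learning the identifier list in $S$ really does require recovering $M$ in its entirety (up to the noted error), and that randomness on the $S$ side cannot help, so that the counting above is legitimate. This is essentially the standard argument that private randomness does not aid one-way information transfer across a cut; I would invoke it exactly as in the proof of \cref{thm:universalLB} rather than redevelop it.
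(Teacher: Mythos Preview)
Your proposal is correct and follows essentially the same approach as the paper: choose a minimum cut with $|S|\le n/2$, condition on the identifiers in $S$, lower bound the number of possible identifier sets for $V\setminus S$ by $\binom{n^c-|S|}{|V\setminus S|}=2^{\Omega(n\log n)}$, and then invoke the transcript-counting argument from the proof of \cref{thm:universalLB} to conclude $t=\Omega(n/\lambda)$. The paper's proof is terser but structurally identical, including the explicit appeal to the analysis of \cref{thm:universalLB}.
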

\begin{proof}
The proof is very similar to the proof of \cref{thm:universalLB}, so we omit the tedious details and only discuss the differences here. We select $S \subseteq V$  such that $|E(S, V\setminus S)|=\lambda$ and $|S| \leq |V|/2$. We reveal the IDs of the nodes in $S$ and do the remaining analysis conditioning on these IDs. Let $\mathcal{M}$ be the collection of all possible choices of $\{\ID(v) \, | \, v \in V \setminus S\}$. The way we pick the IDs ensures that \[|\mathcal{M}| 
 = \genfrac(){0pt}{}{n^c - |S|}{|V \setminus S|}
  \geq \genfrac(){0pt}{}{n^c /2}{|V \setminus S|}
 \geq \genfrac(){0pt}{}{n^c /2}{n/2} = 2^{\Omega(n \log n)}.\]
From the analysis of  \cref{thm:universalLB}, to let the nodes in $S$ learn $\{\ID(v) \, | \, v \in V \setminus S\}$ correctly with probability at least $1/2$, it is necessary to use \[\Omega\left(\frac{\log |\mathcal{M}|}{|E(S, V\setminus S)| \log n}\right) = \Omega\left(\frac{n \log n}{\lambda \log n}\right) = \Omega\left(\frac{n}{\lambda}\right)\]
rounds of communication.
\end{proof}

Same as \cref{thm:universalLB}, the $\Omega(n / \lambda)$ lower bound of \cref{thm:universalLB2} is a \emph{universal} lower bound in that the lower bound applies to \emph{every graph} $G$. In light of \cref{thm:universalLB2}, any algorithm that solves approximate APSP in $o(n / \lambda)$ rounds for certain graphs must involve strange tricks that recompute the IDs of some nodes in these graphs or must work under the assumption that the list of all IDs was known to the algorithm.

Next, we show that $\Omega(n / \lambda)$ is still a lower bound for approximate \emph{weighted} APSP even without the random ID assumption and allowing the nodes to recompute their IDs. Unlike \cref{thm:universalLB2} which is a {universal} lower bound, the following $\Omega(n / \lambda)$ lower bound only applies to a special family of graphs.

\begin{theorem}[Lower bound for weighted APSP]\label{thm:apspLB}
Let $\alpha = \Omega(1)$.
     For any two positive integers $\lambda$  and 
 $n$ such that $n \geq \lambda+1$, there exists a graph $G$ whose edge connectivity is $\lambda$ such that $\Omega\left(\frac{n}{\lambda \log \alpha}\right)$ is a lower bound for $\alpha$-approximate {weighted} APSP with probability at least $1/2$ on $G$ where all the weights are integers in $[n^c]$ for some constant $c > 0$.
\end{theorem}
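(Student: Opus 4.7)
The plan is to prove the lower bound via a two-party communication complexity reduction, in the same spirit as the proofs of \cref{thm:universalLB,thm:universalLB2}, but with the hidden input encoded as \emph{edge weights} rather than as node IDs. The key quantitative point is that an $\alpha$-approximate distance query can only distinguish weights that are $\alpha^2$-separated, so the most information one can recover per encoded edge is $\Theta(\log(\log n / \log \alpha))$ bits; summing this over $\Theta(n)$ encoded edges yields $\Omega(n \log n / \log \alpha)$ bits that must cross a $\lambda$-edge bottleneck with bandwidth $O(\lambda \log n)$ bits per round.

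Concretely, I would construct $G$ by chaining $\lceil n/(\lambda+1) \rceil$ copies of $K_{\lambda+1}$ with $\lambda$-edge matchings between consecutive cliques; this graph is simple, has $n$ vertices, and any one of the inter-clique matchings is a minimum cut $E(S,V\setminus S)$ of size exactly $\lambda$, witnessing edge connectivity $\lambda$. Inside $S$, I attach $L = \Theta(n)$ ``encoding terminals'' $\ell_1, \ldots, \ell_L$, each joined to some node $r_i \in S$ through a gadget whose only weight-variable piece is a private edge $\{r_i, \ell_i\}$; the remaining scaffolding edges of $G$ carry the heavy weight $n^c$, arranged so that any detour around $\{r_i, \ell_i\}$ costs at least $n^c$, which exceeds the largest possible value of $w_i$. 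Alice then picks each $w_i$ independently and uniformly from the $\alpha^2$-separated geometric set $W = \{1, \lceil\alpha^2\rceil, \lceil\alpha^4\rceil, \ldots, \lceil\alpha^{2K}\rceil\}$ with $K = \lfloor(c\log n)/(2\log\alpha)\rfloor$, so that every $\alpha$-approximation of $d(r_i,\ell_i) = w_i$ pins $w_i$ uniquely down inside $W$.

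With this encoding in place, the reduction runs as in the proof of \cref{thm:universalLB}: let Alice simulate the nodes of $S$ (she knows $X = (w_1,\ldots,w_L)$) and Bob simulate the nodes of $V\setminus S$, so that every round transmits at most $O(\lambda \log n)$ bits across the cut. After any $\alpha$-approximate APSP algorithm that succeeds with probability at least $1/2$, Bob can read off an $\alpha$-approximation of every $d(r_i,\ell_i)$ and hence recover $X$ with probability at least $1/2$. Since $X$ is uniform over $W^L$, its entropy is $L \log_2 |W| = \Theta(n \log(\log n/\log\alpha))$, and the transcript-counting argument used for \cref{thm:universalLB} forces $T \cdot O(\lambda \log n) \geq H(X) - O(1)$. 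A short book-keeping step (for $\alpha$ in the polynomial range $2 \leq \alpha \leq \poly(n)$) converts this into $T = \Omega(n/(\lambda \log \alpha))$.

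The main obstacle is the gadget design: attaching $\Theta(n)$ encoding terminals to a chain of cliques while simultaneously preserving edge connectivity exactly $\lambda$ and ensuring that every direct edge $\{r_i,\ell_i\}$ is the unique shortest path. I plan to handle this by designating one $K_{\lambda+1}$ in the chain as the ``hub clique'', picking a hub vertex $r$ inside it, and attaching each $\ell_i$ via $\lambda$ internally disjoint length-$2$ paths so that the private edge $\{r,\ell_i\}$ alone contributes $w_i$ while the other $\lambda-1$ paths carry dummy weight $n^c$; this preserves $\lambda$-edge-connectivity at every $\ell_i$ and keeps $d(r,\ell_i) = w_i$. Verifying that this construction fits within $n$ vertices and yields a simple graph with the desired minimum cut is the main routine case work.
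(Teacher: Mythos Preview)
Your construction is far more elaborate than the paper's and carries a genuine gap in the reduction. The paper dispenses with the chain of cliques and the Alice/Bob split entirely: it takes a single vertex $v_1$ of degree exactly $\lambda$, joined by a weight-$1$ edge to $v_2$ and by weight-$n^c$ edges to $v_3,\ldots,v_{\lambda+1}$; it makes $\{v_3,\ldots,v_n\}$ a weight-$n^c$ clique; and it hides the random input in the weights $(2\alpha)^{k_i}$ of the edges $\{v_2,v_i\}$ for $i\ge 3$, where each $k_i$ is uniform in $[k_{\max}]$ with $k_{\max}=\Theta(\log n/\log\alpha)$. Then $d(v_1,v_i)=1+(2\alpha)^{k_i}$, an $\alpha$-approximation at $v_1$ recovers every $k_i$, and the $\lambda$ edges incident to $v_1$ are the only bottleneck---no gadget case analysis is needed.

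Your reduction, by contrast, places both $r_i$ and $\ell_i$ inside $S$, so after APSP Bob (simulating $V\setminus S$) never outputs $d(r_i,\ell_i)$: APSP only guarantees that each node learns its \emph{own} distances. If Bob instead reads $d(b,\ell_i)$ for some $b\in V\setminus S$, the weight-$n^c$ scaffolding forces $d(b,\ell_i)\ge n^c$, and the multiplicative slack $(\alpha-1)n^c$ already exceeds the whole range of $w_i\in[1,n^c]$, so nothing is recovered; the paper's key trick is precisely to put the decoding node at weighted distance~$1$ from the encoding hub. Separately, the arithmetic in your first paragraph slips: $\Theta(n)$ edges each carrying $\Theta(\log(\log n/\log\alpha))$ bits sum to $\Theta\bigl(n\log(\log n/\log\alpha)\bigr)$, not $\Omega(n\log n/\log\alpha)$---as your own entropy computation in the third paragraph confirms---and dividing by $\lambda\log n$ does not yield $\Omega(n/(\lambda\log\alpha))$ when $\alpha$ is a constant.
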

\begin{proof}
Let $k_{\max} = \Theta\left(\frac{\log n}{\log \alpha}\right)$ be the largest integer such that $(2\alpha)^{k_{\max}} < n^c$. For each integer $i \in [3,n]$, let $k_i$ be an integer selected uniformly at random from $[k_{\max}]$ independently.
Consider the following construction of a graph $G=(V,E)$ whose edge connectivity is $\lambda$. 
\begin{itemize}
    \item Let $V = \{v_1, v_2, \ldots, v_n\}$ be the node set of $G$.
    \item Connect $v_1$ to $v_2$ using an edge with weight $1$.
    \item Connect $v_1$ to nodes in $\{v_3, v_4, \ldots, v_{\lambda+1}\}$ using edges with weight $n^c$.
    \item Make $\{v_3, v_4, \ldots, v_n\}$ a clique using edges with weight $n^c$.
    \item Connect $v_2$ to nodes in $\{v_3, v_4, \ldots, v_n\}$ such that the weight of edge $\{v_2, v_i\}$ is $(2\alpha)^{k_i}$ for each $i \in [3,n]$.
\end{itemize}
For node $v_1$ to $\alpha$-approximate its distance to all other nodes, $v_1$ must learn the precisely the values of $\{k_3, k_4, \ldots, k_n\}$. 
As $v_1$ is incident to $\lambda$ edges, there is an information-theoretic $\Omega\left(\frac{k_{\max} \cdot (n-2)}{\lambda \log n}\right) = \Omega\left(\frac{n}{\lambda \log \alpha}\right)$ lower bound for this task.
\end{proof}

As a consequence, the round complexity $\tilde{O}(n/\lambda)$ of \cref{cor:weightedAPSP} is nearly optimal.

\bibliographystyle{alpha}
\bibliography{References}

%\newpage
\appendix

\section{An Alternative Approach for Low-Diameter Tree Packings}\label{sec:existential}

In this section, we discuss an alternative approach for low-diameter tree packings based on the techniques in \cite{DBLP:conf/icalp/ChuzhoyPT20}.
%In this section, 
Recall that \cref{thm:tree_packing} implies that a tree packing of $\lambda$ spanning trees of diameter $O((n \log n)/\delta)$ with congestion $O(\log n)$ exists for any graph with edge connectivity $\lambda$ and minimum degree $\delta$ and can be computed in $O((n \log n)/\delta)$ rounds by BFS (\cref{lemma:construct_bfs_tree}). Here we give an alternative proof of the existential result.

 \begin{theorem}\label{thm:tree_packing_existential}
Let $G$ be any simple graph with edge connectivity $\lambda$ and minimum degree $\delta$.
There is a polynomial-time algorithm that computes at least $\lambda$ spanning trees of diameter $O((n \log n)/\delta)$ such that each edge in $G$ appears in $O(\log n)$ trees \whp. 
\end{theorem}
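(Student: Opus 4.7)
The plan is to apply the key random-edge-sampling insight of Lemma~\ref{lem-mainlemma} $\lambda$ times independently, rather than once as a random partition. This trades the $\Omega(\lambda/\log n)$ edge-disjoint subgraphs of Theorem~\ref{thm:tree_packing} for $\lambda$ subgraphs at the cost of a mild $O(\log n)$ factor in congestion.

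The first step is to construct, for each $i \in [\lambda]$, a random spanning subgraph $G_i = (V, E_i)$ by including each edge $e \in E$ in $E_i$ independently with probability $p = (C \log n)/\lambda$, where $C$ is a sufficiently large constant, and making all $\lambda \cdot |E|$ sampling decisions mutually independent. The second step is to apply Lemma~\ref{lem-mainlemma} to each $G_i$ individually: with probability $1 - n^{-\Omega(C)}$ the subgraph $G_i$ is spanning and has diameter $O((n \log n)/\delta)$; a union bound over $i \in [\lambda] \subseteq [n]$ gives this property for all $\lambda$ subgraphs simultaneously with probability $1 - n^{-\Omega(C)}$.

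The third step is to bound the congestion. For a fixed edge $e$, the number of subgraphs $G_i$ containing $e$ is the sum of $\lambda$ independent $\mathrm{Bernoulli}(p)$ random variables, which has mean $C \log n$. By a standard Chernoff bound this count exceeds $2C \log n$ with probability $n^{-\Omega(C)}$, and a union bound over all $\binom{n}{2}$ edges yields that every edge appears in at most $O(\log n)$ of the $G_i$'s w.h.p. The fourth and final step is to convert each $G_i$ into a spanning tree $T_i$ by computing, in polynomial time, a BFS tree rooted at an arbitrary vertex of $G_i$; since $T_i \subseteq G_i$, the congestion of the collection $\{T_i\}_{i \in [\lambda]}$ is no worse than that of $\{G_i\}_{i \in [\lambda]}$, and the diameter of $T_i$ is at most twice that of $G_i$, which is still $O((n \log n)/\delta)$.

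The main obstacle, which is fairly mild, is the calibration of the constant $C$: to survive the union bound over $\Theta(n^2)$ edges in the congestion analysis, the per-edge Chernoff failure probability must be driven below any fixed inverse polynomial, which is achieved by picking $C$ sufficiently large. Beyond this tuning, the argument is essentially a black-box composition of Lemma~\ref{lem-mainlemma} with standard concentration inequalities, so no new structural insight is required on top of what is already in hand.
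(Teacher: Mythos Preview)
Your proof is correct, but it takes a genuinely different route from the paper's proof of this theorem. The paper places this theorem in an appendix specifically to illustrate an \emph{alternative} approach that avoids the random-sampling lemma altogether: it first proves a structural result (\cref{lem:kdconn}) showing that any simple graph with edge connectivity $\lambda$ and minimum degree $\delta$ is $(\lambda/5,\,16n/\delta)$-connected (every pair of nodes is joined by $\lambda/5$ edge-disjoint paths of length at most $16n/\delta$), and then invokes the Chuzhoy--Parter--Tan tree-packing theorem (\cref{lem:CPTtool}) as a black box on $(k,d)$-connected graphs to extract the trees; five-fold replication boosts the count from $\lambda/5$ to $\lambda$.

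Your argument, by contrast, stays entirely within the random-sampling framework of \cref{lem-mainlemma}, applied $\lambda$ times with independent (non-partitioning) sampling and a separate Chernoff bound for congestion. This is more self-contained given the paper's own toolkit and requires no external structural machinery; indeed the paper remarks in the comparison paragraph that the theorem can be obtained as a corollary of \cref{thm:tree_packing}, and your proof is a clean realisation of that remark (in fact slightly cleaner than replicating the partition trees $O(\log n)$ times). What the paper's appendix proof buys instead is a deterministic combinatorial statement of independent interest---the $(k,d)$-connectivity bound---and a demonstration that the existential result does not hinge on the specific random-sampling lemma.
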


The proof of \cref{thm:tree_packing_existential} presented below was suggested by an anonymous reviewer, who kindly allowed us to include the proof in the paper.
Following~\cite{DBLP:conf/icalp/ChuzhoyPT20}, we say that a graph is $(k,d)$-\emph{connected} if any two distinct nodes $u$ and $v$ can be connected by at least $k$ edge-disjoint paths of length at most $d$. The following lemma, which applies to even multigraphs, was shown in~\cite{DBLP:conf/icalp/ChuzhoyPT20}.

\begin{lemma}[{Chuzhoy--Parter--Tan~\cite[Theorem 1.4]{DBLP:conf/icalp/ChuzhoyPT20}}]\label{lem:CPTtool}
There is a polynomial-time algorithm that computes at least $k$ spanning trees of diameter $O(d \log n)$ of any $(k,d)$-connected multigraph $G$ such that each edge in $G$ appears in $O(\log n)$ trees \whp.  
\end{lemma}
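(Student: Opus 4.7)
The plan is to prove the lemma by a randomized recursive construction with $O(\log n)$ levels, where at each level the tree diameter grows by an additive $O(d)$ and the per-edge congestion grows by an additive $O(1)$, yielding the stated bounds $O(d\log n)$ and $O(\log n)$ respectively.

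At each recursive call on a $(k,d)$-connected multigraph $H$, I would run a randomized low-diameter decomposition (of Miller--Peng--Xu type, with radius parameter $\Theta(d)$) to partition $V(H)$ into clusters $C_1,\ldots,C_t$ of strong diameter $O(d)$ in $H$, where each node's count of intercluster edges is at most a constant fraction of its degree in expectation. Inside each $C_j$, I compute a BFS tree $T_j^{\mathrm{in}}$ of diameter $O(d)$. I then form the contracted multigraph $H'$ by collapsing each $C_j$ into a super-node (retaining parallel edges, discarding self-loops), and argue that $H'$ remains $(k,d)$-connected: any $k$ edge-disjoint paths of length $\le d$ in $H$ between two vertices project to $k$ edge-disjoint walks of length $\le d$ in $H'$, since contracted edges only shorten them. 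Recursing on $H'$ (which has at most a constant fraction of $H$'s vertex count in expectation) yields $k$ spanning trees $T_1',\ldots,T_k'$ of $H'$ of diameter $O(d\log|V(H')|)$ and congestion $O(\log n)$. Each $T_i'$ lifts back to a connected spanning subgraph of $H$ by replacing each super-edge with the corresponding original $H$-edge and unioning with the intra-cluster trees $T_j^{\mathrm{in}}$; a BFS of this subgraph yields $T_i$ with $\diam(T_i)=O(d)\cdot O(\log n)=O(d\log n)$. The base case, a constant-size graph, is handled by outputting $k$ trivial trees with $O(1)$ overhead.

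The main obstacle I expect is the congestion analysis. An original edge $e$ may be used at several recursion levels: once as an intra-cluster BFS-tree edge at the level where it is swallowed, once as a chosen super-edge lift, and again if it survives as an intercluster edge into deeper recursion. The key step is to show that each edge participates in $O(1)$ levels of the recursion in expectation, leveraging the MPX-style cutting-probability bound which ensures that an intercluster edge at one level is swallowed with constant probability at the next, and then concentrating over the $O(\log n)$ independent levels via a Chernoff bound to bound total congestion by $O(\log n)$ whp. A secondary subtlety is verifying $(k,d)$-connectivity under contraction in a multigraph, where the projections of edge-disjoint paths may revisit a cluster; handling this requires tracking projections as edge-disjoint \emph{walks} rather than simple paths, which is still enough to drive the recursion. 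An alternative route would be randomized rounding of a fractional tree-packing LP whose support consists of low-diameter spanning trees — $(k,d)$-connectivity exhibits a feasible fractional solution of value $k$ with small per-edge weight, and Chernoff-type rounding extracts $k$ integral trees with congestion $O(\log n)$ — but controlling the tree diameter uniformly throughout the rounding appears to be the crux in that approach and less clean than the recursive plan above.
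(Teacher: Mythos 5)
The paper does not reprove this lemma---it is imported verbatim from Chuzhoy, Parter, and Tan~\cite{DBLP:conf/icalp/ChuzhoyPT20}, so there is no in-paper argument to compare against. Evaluated on its own terms, your proposal has a critical gap in the congestion accounting that breaks it before the subtler worries you flag even arise. In your lifting step, each of the $k$ output trees $T_1,\ldots,T_k$ is constructed by unioning its lifted super-edges with the \emph{same} collection of intra-cluster BFS trees $T_j^{\mathrm{in}}$. Every edge lying in some $T_j^{\mathrm{in}}$ therefore appears in all $k$ output trees, giving congestion $\Omega(k)$, not $O(\log n)$; since a $(k,d)$-connected graph has minimum degree at least $k$, $k$ can be as large as $\Theta(n)$, so this is off by a polynomial factor. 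Your ``main obstacle'' paragraph tracks how many \emph{levels} of the recursion an edge participates in, which is the wrong bottleneck: the failure is across \emph{trees at a single level}, and recursing only compounds it.

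Repairing this would require producing $k$ spanning trees of each cluster $C_j$ with $O(\log n)$ congestion, but an MPX-type low-diameter decomposition gives no connectivity guarantee on the clusters (a cluster could be a single edge or a bare path), so no such intra-cluster packing need exist, and the $(k,d)$-connectivity of $G$ does not descend to its clusters. A secondary unsubstantiated step is the claim that the contracted graph $H'$ has at most a constant fraction of $H$'s vertices in expectation: the MPX guarantee bounds the per-edge cutting probability, not the number of clusters, and one can have all clusters of diameter $O(d)$ while the cluster count remains $\Theta(|V(H)|)$, so the $O(\log n)$ recursion-depth bound does not follow. (Your preservation of $(k,d)$-connectivity under contraction, via edge-disjoint walks rather than paths, is correct and a nice observation, but it does not rescue the congestion argument.) The CPT proof of this statement takes a genuinely different, non-recursive route that never shares a single per-cluster object across all $k$ trees.
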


By \cref{lem:CPTtool},  to prove \cref{thm:tree_packing_existential}, we just need to show the following lemma.

\begin{lemma}\label{lem:kdconn}
Any simple graph $G$ with edge connectivity $\lambda$ and minimum degree $\delta$ is $(\lambda/5, 16n/\delta)$-connected.
\end{lemma}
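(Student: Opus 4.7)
The plan is to prove the $(\lambda/5, 16n/\delta)$-connectivity of $G$ by an incremental greedy construction. Fix distinct $u, v \in V$ and suppose inductively that we have chosen simple edge-disjoint $u$-$v$ paths $P_1, \ldots, P_k$ in $G$, each of length at most $16n/\delta$, for some $k < \lambda/5$; let $G_k := G - \bigcup_{i \le k} E(P_i)$. We will show that a suitable $P_{k+1}$ of length at most $16n/\delta$ exists in $G_k$, and then iterate $\lambda/5$ times.

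The core of the argument is a minimum-degree observation applied to $G_k$. Since every simple path uses at most two edges incident to any single vertex, every $w \in V$ satisfies $\deg_{G_k}(w) \ge \delta - 2k \ge \delta - 2\lambda/5 \ge 3\delta/5$, where the last inequality uses $\delta \ge \lambda$. Let $C$ be the connected component of $G_k$ containing $u$. Every $G_k$-neighbor of a vertex of $C$ lies in $C$, so $C$ induces a connected simple graph on at most $n$ vertices with minimum degree at least $3\delta/5$. Applying \cref{lemma:diameter_vs_lambda} to $C$ yields $\diam(C) \le 3n/(3\delta/5) = 5n/\delta \le 16n/\delta$. Hence, provided $v \in C$, the shortest $u$-$v$ path in $G_k$ has length at most $5n/\delta$ and we take it as $P_{k+1}$.

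It remains to maintain the invariant $v \in C$ throughout the $\lambda/5$ iterations. The natural choice is to let each $P_{i+1}$ be a shortest $u$-$v$ path in $G_i$; a short flow-based argument then shows that such a shortest path uses at most one edge from every $u$-$v$ minimum cut (a second usage would, by parity of cut crossings, force a third, and the resulting $A$-$B$-$A$-$B$ zig-zag along the path could be short-cut through $B$ using a single remaining cut edge, contradicting shortness). Consequently the $u$-$v$ edge-connectivity of $G_k$ drops by exactly one per step, so it remains at least $\lambda - k > 4\lambda/5 > 0$ and $u, v$ stay in the same component.

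The main obstacle is making the shortcut argument for shortest paths versus minimum cuts completely rigorous in general graphs; it is the only step that is not purely counting. Once this sub-claim is settled, iterating the greedy construction for $\lambda/5$ steps produces the desired family of edge-disjoint $u$-$v$ paths, each of length at most $5n/\delta \le 16n/\delta$, and the $(\lambda/5, 16n/\delta)$-connectivity follows immediately.
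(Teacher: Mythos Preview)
Your greedy approach has a genuine gap at the connectivity-maintenance step, and the sub-claim you flag as the ``main obstacle'' is in fact false as stated. Take $S=\{u,s_1,s_2,s_3\}$ and $\bar S=\{v,t_1,t_2,t_3\}$; put all edges inside $S$ except $\{u,s_1\}$, all edges inside $\bar S$ except $\{v,t_1\}$, and exactly the three cross edges $\{u,t_1\}$, $\{t_1,s_1\}$, $\{s_1,v\}$. Then $(S,\bar S)$ is a minimum $u$--$v$ cut of size $3$, yet $u,t_1,s_1,v$ is a shortest $u$--$v$ path (length $3$) that uses all three cut edges, and removing it disconnects $u$ from $v$ in a single step. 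The same zig-zag embeds into graphs with arbitrarily large edge connectivity by thickening both sides and adding further cross edges elsewhere, so this is not an artifact of small parameters. There is also a second logical gap: even if every shortest path used at most one edge from each \emph{minimum} cut, the conclusion ``connectivity drops by exactly one'' would still not follow, since a non-minimum cut of size $c+1$ could lose three path edges and become the new minimum of size $c-2$.

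The paper sidesteps this difficulty with a global extremal argument rather than a greedy one. It fixes, in one shot, a family $\mathcal{P}$ of $s\approx\lambda/2$ edge-disjoint $u$--$v$ paths that \emph{by definition} leaves $u$ and $v$ connected and, subject to that, minimizes total length; it then takes one further shortest path $P^\diamond$ in the leftover graph. The length of $P^\diamond$ is bounded by splitting its vertices into ``congested'' ones (lying on more than $s/2$ paths of $\mathcal{P}$) and the rest: uncongested vertices retain at least $\delta/2$ free edges, which forces their number below $8n/\delta$ via a shortcut-through-an-external-vertex argument, while two adjacent congested vertices would allow some $P_i\in\mathcal{P}$ to be shortened, so congested vertices are never consecutive. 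Your degree-based diameter bound on $G_k$ is correct and pleasant, but as the example shows it cannot by itself carry the induction, because the greedy process may disconnect $u$ from $v$ long before $\lambda/5$ paths have been extracted.
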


Let $G$ be any graph with edge connectivity $\lambda$ and minimum degree $\delta$.
Let $u$ and $v$ be any two distinct nodes in $G$.
To prove \cref{lem:kdconn}, we select $\mathcal{P}=\{P_1, P_2, \ldots, P_s\}$ as a set of $s = \max\{0, \lfloor \lambda/2\rfloor - 2\}$ edge-disjoint $u$-$v$ paths such that removing the set of all edges in these paths does not disconnect $u$ and $v$. Moreover, among all such collections of edge-disjoint paths, we select $\mathcal{P}$ to minimize the sum of path lengths over all paths in $\mathcal{P}$. After fixing $\mathcal{P}$, we select $P^\diamond$ as a minimum-length $u$-$v$ path that does not edge-intersect the paths in $\mathcal{P}$.

We say that a node $w$ in $P^\diamond$ is \emph{congested} if $w$ belongs to more than $s/2$ paths in  $\mathcal{P}$.  We write $C$ to denote the set of congested nodes in $P^\diamond$, and we write $U$ to denote the set of remaining nodes in $P^\diamond$. We bound the length $|U| + |C| - 1$ of $P^\diamond$ by analyzing $|U|$ and $|C|$ separately in the following two claims.

\begin{claim}\label{clm:Usize}
     $|U| < 8n/\delta$.
\end{claim}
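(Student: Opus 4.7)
The plan is to bound $|U|$ via a shortcut argument that exploits the fact that $P^\diamond$ was chosen to minimize its length among $u$-$v$ paths edge-disjoint from $\mathcal{P}$. For each $w \in U$, let $N^{\mathsf{free}}(w)$ denote the set of neighbors of $w$ reached by edges not used by any path of $\mathcal{P}$. Since $w \notin C$, at most $s/2$ paths of $\mathcal{P}$ pass through $w$, so at most $s \leq \lambda/2 \leq \delta/2$ edges incident to $w$ are used by $\mathcal{P}$. Combined with $\deg(w) \geq \delta$ and the simplicity of $G$, this yields $|N^{\mathsf{free}}(w)| \geq \delta - s \geq \delta/2$.

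The key step is a shortcut observation: suppose two distinct nodes $w_i, w_j \in U$ appear on $P^\diamond$ at distance at least $3$ apart and share a common free neighbor $x$. Replace the subpath of $P^\diamond$ from $w_i$ to $w_j$ (which has length $\geq 3$) by the length-$2$ walk through $x$, i.e., the edges $\{w_i,x\}$ and $\{x,w_j\}$. The resulting $u$-$v$ walk uses only edges of $P^\diamond$ together with these two free edges, so it remains edge-disjoint from $\mathcal{P}$, and its length is at most $|P^\diamond|-1$. Extracting any simple $u$-$v$ subpath therefore yields a path strictly shorter than $P^\diamond$ and edge-disjoint from $\mathcal{P}$, contradicting the choice of $P^\diamond$. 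Consequently, the free neighborhoods of any two such nodes of $U$ must be disjoint.

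To finish, list the nodes of $U$ in the order they appear along $P^\diamond$ and take every third one, forming $U' \subseteq U$ with $|U'| \geq |U|/3$ in which any two nodes lie at $P^\diamond$-distance at least $3$. By the previous step, $\{N^{\mathsf{free}}(w) : w \in U'\}$ are pairwise disjoint subsets of $V$, so
\[
n \;\geq\; \sum_{w \in U'} |N^{\mathsf{free}}(w)| \;\geq\; |U'| \cdot \frac{\delta}{2} \;\geq\; \frac{|U| \cdot \delta}{6},
\]
which rearranges to $|U| \leq 6n/\delta < 8n/\delta$. The delicate point is the shortcut step: one must check both that the replacement walk strictly shortens $P^\diamond$ and that passing to a simple subpath preserves edge-disjointness from $\mathcal{P}$, which is precisely why the shortcut must use a \emph{free} neighbor rather than any common neighbor.
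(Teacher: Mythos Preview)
Your argument is correct and rests on the same core idea as the paper: a common ``free'' neighbor of two far-apart nodes on $P^\diamond$ yields a length-$2$ shortcut that contradicts the minimality of $P^\diamond$. The packaging, however, differs. The paper excludes edges of $P^\diamond$ as well as edges of $\mathcal{P}$ when forming the free edge set $E_w$, so every free edge leaves $P^\diamond$ entirely; it then argues that $|\bigcup_{w\in U} E_w|\geq |U|\cdot\delta/2\geq 4n$, and by pigeonhole some node $w^\ast\notin P^\diamond$ has at least four free edges into $U$, whence the two extreme endpoints among those four are at $P^\diamond$-distance $\geq 3$ and can be shortcut through $w^\ast$. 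You instead keep $P^\diamond$-edges in $N^{\mathsf{free}}(w)$ and enforce the distance-$\geq 3$ condition directly by selecting every third $U$-node along $P^\diamond$, after which disjointness of the free neighborhoods gives $n\geq |U'|\cdot\delta/2\geq |U|\cdot\delta/6$. Your route avoids the separate step of showing that free edges must exit $P^\diamond$, and it yields the slightly sharper bound $|U|\leq 6n/\delta$; the paper's route avoids the subset selection but needs the stronger assumption $|U|\geq 8n/\delta$ to guarantee four neighbors at a single external node.
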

\begin{proof}
Suppose $|U| \geq 8n/\delta$.
For each node $w \in U$, let $E_w$ denote the set of edges incident to $w$ that do not belong to the paths in $\mathcal{P} \cup \{P^\diamond\}$. We have $|E_w| \geq  \delta - 2 \cdot (1+s/2) \geq  \delta - \lambda/2 \geq \delta - \delta/2 = \delta/2$. The two endpoints of each edge in $E_w$ must be $w$ and some node that is not in $P^\diamond$, since otherwise $P^\diamond$ can be shortened using this edge, contradicting our choice of $P^\diamond$. Consequently, $|\bigcup_{w \in U} E_w| \geq |U| \cdot \delta/2 \geq 4n$. Therefore, there exists a node $w^\ast$ that is not in $P^\diamond$ and is adjacent to at least four nodes in $U$ via the edges in $\bigcup_{w \in U} E_w$, meaning that $P^\diamond$ can be shortened using two of these edges, contradicting our choice of $P^\diamond$. Therefore, $|U| < 8n/\delta$.
\end{proof}

\begin{claim}\label{clm:Csize}
     $|C| \leq |U|+1$.
\end{claim}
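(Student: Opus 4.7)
The plan is to establish a stronger structural fact: no two nodes that appear consecutively along $P^\diamond$ are both congested. Once this is in hand, the claim $|C|\leq |U|+1$ follows immediately by a pigeonhole argument on the linear order of nodes along $P^\diamond$, since between any two congested nodes there must lie at least one uncongested node, so $|U|\geq |C|-1$. The degenerate case $s=0$ (so $\mathcal{P}=\emptyset$ and $|C|=0$) is vacuous.

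To prove the structural fact, I will argue by contradiction, exploiting the minimality of the total length of $\mathcal{P}$. Assume that two adjacent nodes $x,y$ on $P^\diamond$ are both congested, and let $e=\{x,y\}$ be the edge of $P^\diamond$ joining them. Since strictly more than $s/2$ paths in $\mathcal{P}$ contain $x$ and strictly more than $s/2$ contain $y$, a pigeonhole argument produces some $P_j\in\mathcal{P}$ that visits both $x$ and $y$. Let $\pi$ be the subpath of $P_j$ between the visits of $x$ and $y$. Because $P^\diamond$ is edge-disjoint from $\mathcal{P}$ we have $e\notin P_j$, and because $G$ is simple $e$ is the only edge between $x$ and $y$, so $|\pi|\geq 2$. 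I will then splice $e$ into $P_j$ in place of $\pi$ to obtain $P_j'$, and set $\mathcal{P}' = (\mathcal{P}\setminus\{P_j\})\cup\{P_j'\}$; this strictly decreases the sum of path lengths by $|\pi|-1\geq 1$.

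The final step is to verify that $\mathcal{P}'$ still satisfies the defining constraints of $\mathcal{P}$, in which case the strict length decrease contradicts the choice of $\mathcal{P}$ as the minimizer. Edge-disjointness of $\mathcal{P}'$ is automatic since the only newly introduced edge is $e$, which lay on $P^\diamond$ and therefore in no other path of $\mathcal{P}$; moreover, $P_j'$ remains a simple path because $P_j$ was. The more delicate check, which I expect to be the main obstacle, is that removing the edges of $\mathcal{P}'$ still leaves $u$ and $v$ connected. This will be witnessed by the $u$-$v$ walk that traverses $P^\diamond$ from $u$ to $x$, then follows $\pi$ from $x$ to $y$ (whose edges are freed by the swap, since the paths in $\mathcal{P}$ are edge-disjoint, so those edges belonged only to $P_j$), and finally follows $P^\diamond$ from $y$ to $v$. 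Once this bookkeeping is handled, the contradiction with minimality is immediate and the claim follows.
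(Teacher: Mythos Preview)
Your proposal is correct and follows essentially the same argument as the paper: both show that two adjacent congested nodes on $P^\diamond$ would allow shortcutting some $P_j\in\mathcal{P}$ via the edge $e$, contradicting the minimality of $\mathcal{P}$, with the freed subpath $\pi$ spliced into $P^\diamond$ to witness that $u$ and $v$ remain connected after removing $\mathcal{P}'$. Your write-up adds a few details the paper leaves implicit (simplicity of $G$ forcing $|\pi|\ge 2$, edge-disjointness of $\mathcal{P}'$, the $s=0$ case), but the route is the same.
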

\begin{proof}
Suppose $|C| > |U|+1$. Then there exists an edge $e=\{w,x\}$ in $P^\diamond$ whose both endpoints are in $C$, meaning that they both belong to more than $s/2$ paths in $\mathcal{P}$, so there exists a path $P' \in \mathcal{P}$ containing both endpoints of $e$. Let $P''$ be the result of shortening $P'$ by replacing the $w$-$x$ subpath of $P'$ with $e=\{w,x\}$. Let $\mathcal{P}'$ be the result of replacing $P'$ with $P''$ in the set $\mathcal{P}$. Since $\mathcal{P}'$ is smaller than $\mathcal{P}$ in terms of the sum of path lengths, the existence of $\mathcal{P}'$ contradicts our choice of $\mathcal{P}$, so we must have $|C| \leq |U|+1$. To see that removing the set of all edges in the paths in $\mathcal{P}'$ does not disconnect $u$ and $v$, we may consider the $u$-$v$ path resulting from replacing the edge $e$ of $P^\diamond$ with the subpath connecting $w$ and $x$ of $P'$.
\end{proof}

We are ready to prove \cref{lem:kdconn}.

\begin{proof}[Proof of \cref{lem:kdconn}]
To prove the lemma, we just need to show that   $\mathcal{P}\cup \{P^\diamond\}$ is a set of at least $\lambda/5$ edge-disjoint $u$-$v$ paths of length at most $16n/\delta$.
By its definition, the set $\mathcal{P}\cup \{P^\diamond\}$ contains $s+1 = \max\{1, \lfloor \lambda/2\rfloor - 1\}$ edge-disjoint $u$-$v$ paths. Observe that the inequality $\max\{1, \lfloor \lambda/2\rfloor - 1\} \geq \lambda/5$ holds for all positive integers $\lambda$, so the requirement on the number of paths is satisfied.

 By \cref{clm:Usize,clm:Csize}, the number of nodes in $P^\diamond$ is  $|U| + |C| \leq 1 + 2|U| < 1 + 16n/\delta$, meaning that the length of $P^\diamond$ is less than $16n/\delta$.
 Our choice of $\mathcal{P}$ implies that each path in $\mathcal{P}$ has a smaller length than $P^\diamond$, so the requirement on the length of paths is also satisfied.
\end{proof}

We prove \cref{thm:tree_packing_existential} by combining \cref{lem:kdconn,lem:CPTtool}.

\begin{proof}[Proof of \cref{thm:tree_packing_existential}]
By \cref{lem:kdconn}, $G$ is $(k,d)$-connected with $k = \lambda/5$ and $d = 16n/\delta$. The polynomial-time algorithm of \cref{lem:CPTtool} computes a collection $\mathcal{T}$ of at least $k$ spanning trees of diameter $O(d \log n) = O((n\log n)/\delta)$ such that each edge of $G$ appears in $O(\log n)$ trees in $\mathcal{T}$ \whp.
By creating five copies of each spanning tree in $\mathcal{T}$, all requirements in \cref{thm:tree_packing_existential} are met.
\end{proof}

\paragraph{Remark.} Unlike the results in~\cite{DBLP:conf/icalp/ChuzhoyPT20} which apply to multigraphs, \cref{thm:tree_packing,thm:tree_packing_existential} do not extend to multigraphs. To see this, consider the multigraph resulting from replacing each edge of an $n$-node path graph with $k=n^{0.1}$ parallel edges. The minimum degree of the graph is $\delta = k$.
The diameter of any spanning tree of the graph is $n-1$, which is significantly larger than $n/\delta = n^{0.9}$, so \cref{thm:tree_packing,thm:tree_packing_existential} do not work for this graph.

\paragraph{Comparison.} \cref{thm:tree_packing} is stronger than \cref{thm:tree_packing_existential} in the sense that \cref{thm:tree_packing_existential} can be established as a corollary of \cref{thm:tree_packing}. For the other direction, the existence of the decomposition of \cref{thm:tree_packing} does not follow from \cref{thm:tree_packing_existential}. Unlike \cref{thm:tree_packing}, the algorithm underlying \cref{thm:tree_packing_existential} does not seem to admit an efficient distributed implementation.

\section{A Lower Bound for Tree Packing} \label{appendix_lower_bound}

In this section, we show how to extend the following lower bound of Ghaffari and Kuhn~\cite{DBLP:conf/wdag/GhaffariK13} to tree packings with low congestion. 
%We use the following theorem from~\cite{DBLP:conf/wdag/GhaffariK13}.

\begin{theorem}[{Ghaffari--Kuhn~\cite[Theorem D.1]{DBLP:conf/wdag/GhaffariK13}}] \label{lower_bound_broadcast}
For any $\lambda \geq 1$, there exist unweighted simple graphs $G=(V,E)$ with edge connectivity at least $\lambda$ and diameter $D=O(\log{n})$ such that for two distinguished nodes $s$ and $t$, sending $K$ bits of information from $s$ to $t$ requires $\Omega(\min{ \{ K/\log^2{n}, n/\lambda\}})$ rounds in \congest.
\end{theorem}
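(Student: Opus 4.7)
The plan is to prove the theorem by exhibiting an explicit graph family $\{G_n\}$ realizing the lower bound and applying an information-theoretic bandwidth argument. The construction should simultaneously achieve edge connectivity at least $\lambda$, diameter $O(\log n)$, and an $s$-$t$ ``bottleneck'' ensuring that only $\Theta(\log n)$ edge-disjoint \emph{short} $s$-$t$ routes (of length $O(\log n)$) exist, with any remaining $\lambda - \Theta(\log n)$ edge-disjoint routes all of length $\Omega(n/\lambda)$. A natural candidate is to plant a fast core of $\Theta(\log n)$ node-disjoint paths of length $\Theta(\log n)$ between $s$ and $t$, then augment with slow padding --- for instance, a cycle of $\Theta(n/\lambda)$ small expander blocks attached to $s$ and $t$ --- that contributes the remaining edge-disjoint routes via length-$\Omega(n/\lambda)$ detours. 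Internal core nodes would be inflated to small local cliques of size $\lambda$ to meet the minimum-degree requirement without creating accidental short shortcuts.

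For the lower bound, let $s$ receive $K$ i.i.d.~uniform random bits that $t$ must recover after $T$ rounds with constant probability. I would argue via a causal bandwidth bound at $t$: the state of $t$ in round $T$ depends only on the messages crossing $s$-$t$ cuts during rounds $0, \ldots, T-1$. Consider two regimes. When $T < n/\lambda$, no time-respecting walk through the slow padding has had time to carry information from $s$ to $t$, so the effective bandwidth comes solely from the $\Theta(\log n)$ short core paths, each contributing at most $\log n$ bits per round; this gives at most $O(T \log^2 n)$ total bits reaching $t$. Fano's inequality then forces $T \log^2 n = \Omega(K)$, i.e., $T = \Omega(K/\log^2 n)$. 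Otherwise $T \geq n/\lambda$ holds directly. Combining the two cases yields $T = \Omega(\min\{K/\log^2 n, n/\lambda\})$, as claimed.

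The main obstacle lies in the construction itself, since the three required properties are in tension. Small diameter forces every vertex to be within $O(\log n)$ of both $s$ and $t$, which conflicts with hiding $\lambda - \Theta(\log n)$ additional edge-disjoint routes behind $\Omega(n/\lambda)$-length detours; and raising edge connectivity to $\lambda$ naturally produces many short $s$-$t$ paths in random-like graphs. I expect the sharpest technical step would be verifying, via a Menger-type counting over the augmentation structure, that no unintended short $s$-$t$ paths slip in. For the regime $\lambda \gg \log n$ a more delicate probabilistic construction (e.g., a planted-skeleton graph with carefully placed long-range shortcuts), combined with concentration arguments for edge connectivity and isoperimetry, would likely be required to close the gap.
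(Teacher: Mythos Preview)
This theorem is not proved in the paper; it is quoted from Ghaffari--Kuhn~\cite{DBLP:conf/wdag/GhaffariK13} and used as a black box in \cref{appendix_lower_bound}. So there is no ``paper's own proof'' to compare against, and what you have written is a proposal for an independent proof of a cited result.

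As a proof, your proposal has a genuine gap in the information-flow step. You assert that when $T < n/\lambda$ ``no time-respecting walk through the slow padding has had time to carry information from $s$ to $t$,'' and hence only the $\Theta(\log n)$ core paths contribute bandwidth. But your graph has diameter $O(\log n)$, so \emph{every} node in the slow padding is within $O(\log n)$ hops of both $s$ and $t$; information can in principle jump off the long detour, traverse the diameter-reducing structure, and reach $t$ in $O(\log n)$ rounds. The crux of this style of lower bound (as in Das~Sarma~et~al.~\cite{sarma2012distributed} and its descendants) is precisely to design the diameter-reducing shortcuts so that they themselves form a low-bandwidth bottleneck---typically a balanced tree whose root touches only $O(1)$ edges per level---and then to argue formally, usually via a reduction to two-party communication complexity rather than a raw Fano bound, that the combined capacity of shortcuts plus long paths over $T$ rounds is $O(T\log^2 n)$ bits. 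Your sketch neither specifies how the padding nodes acquire small diameter without creating high-bandwidth escape routes, nor gives the layered light-cone argument that isolates the contribution of each route type; the phrase ``a cycle of $\Theta(n/\lambda)$ small expander blocks attached to $s$ and $t$'' does not yet resolve this, and your own closing paragraph concedes as much.

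In short: the high-level shape (few short paths, many long paths, bandwidth counting) is right, but the construction is underspecified at exactly the point where the three constraints interact, and the bandwidth argument as stated does not go through without controlling information leakage through the diameter-reducing edges.
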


%\paragraph{Scheduling distributed algorithms.}
Consider $k$ arbitrary algorithms $\mathcal{A}_1, \mathcal{A}_2, \ldots, \mathcal{A}_k$. Let $\congestion$ be the maximum number of messages passing through an edge when the $k$ algorithms are run together.  Let $\dilation$ be the maximum round complexity over these $k$ algorithms. The following result was shown by Ghaffari~\cite{ghaffari2015near}.

\begin{theorem}[Scheduling distributed algorithms~\cite{ghaffari2015near}]\label{lem:scheduling}
There is an algorithm that executes any collection of distributed algorithms $\mathcal{A}_1, \mathcal{A}_2, \ldots, \mathcal{A}_k$ together in $O(\congestion) + O(\dilation \cdot \log^2 n)$ rounds \whp.
\end{theorem}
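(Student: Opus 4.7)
The plan is to prove this by a random-delay scheduling argument in the style of Leighton, Maggs, and Rao, adapted to the distributed setting. At a high level, each of the $k$ algorithms will be started at a random offset so that its messages spread out in time, after which the resulting ``overloaded'' schedule can be implemented round-by-round with only a polylogarithmic slowdown.

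First, I would assign each algorithm $\mathcal{A}_i$ an independent random start time $\delta_i$ drawn uniformly from $\{0, 1, \ldots, \lfloor c\cdot \congestion/\log n\rfloor\}$ for a sufficiently large constant $c$. Think of the resulting ``virtual schedule'' as follows: algorithm $\mathcal{A}_i$ performs its $t$-th step at virtual time $\delta_i + t$. The total length of this virtual schedule is at most $O(\congestion/\log n) + \dilation$. Each message that $\mathcal{A}_i$ would send across a fixed edge $e$ in round $t$ becomes a ``packet'' that wants to traverse $e$ at virtual time $\delta_i + t$.

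The key step is a Chernoff plus union bound argument showing that, with high probability, no edge is required to transmit more than $O(\log n)$ packets at any single virtual time step. Fix an edge $e$ and a virtual time $\tau$. For each algorithm $\mathcal{A}_i$, the indicator that a packet of $\mathcal{A}_i$ crosses $e$ at time $\tau$ is a $0/1$ random variable whose sum over the $\delta_i$'s has expectation $O(\congestion \cdot \log n/\congestion) = O(\log n)$, because $\mathcal{A}_i$ contributes at most $\dilation$ packets to edge $e$ but each lands at a specific virtual time with probability $O(\log n / \congestion)$; the total expected number of packets at $(e,\tau)$ is $O(\log n)$ since the sum over $i$ of the number of packets $\mathcal{A}_i$ sends on $e$ is at most $\congestion$. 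A Chernoff bound followed by a union bound over all $|E| \le n^2$ edges and all $O(\congestion/\log n + \dilation)$ virtual time steps yields the per-edge-per-round load bound with high probability.

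The final step is to actually execute this virtual schedule in the \congest model. Each virtual round has load at most $O(\log n)$ per edge, so it can be implemented in $O(\log n)$ real rounds by serializing the packets on each edge; an additional $O(\log n)$ factor arises from the coordination needed to have each node locally synchronize which virtual round it is currently simulating and to buffer incoming packets (this is where the distributed, as opposed to centralized, setting costs us an extra log factor). Multiplying the virtual schedule length $O(\congestion/\log n) + \dilation$ by the $O(\log^2 n)$ per-round overhead gives $O(\congestion \log n) + O(\dilation \log^2 n)$; a tighter accounting, observing that the $\congestion/\log n$ portion only suffers the $O(\log n)$ serialization factor and not the synchronization factor (since the messages are already laid out on each edge in order), recovers the claimed $O(\congestion) + O(\dilation \log^2 n)$ bound.

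The main obstacle I expect is in the last step: the random-delay / Chernoff portion is standard, but carefully arguing in the distributed model that nodes can faithfully simulate the virtual schedule without an additional $\log n$ factor on the $\congestion$ term requires care, since a node needs to know, for each of its incident edges, which virtual-round packet to send next, and this bookkeeping must not blow up the bandwidth. The typical way around this is to have each node maintain a local queue per edge and label packets with their virtual time, and then argue that under the high-probability load bound the queues never exceed $O(\log n)$, so the queue head can always be identified with $O(\log n)$-bit labels per packet.
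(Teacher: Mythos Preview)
The paper does not give its own proof of this statement: it is quoted as a black-box result from Ghaffari's scheduling paper~\cite{ghaffari2015near}, with no argument supplied. So there is nothing in the present paper to compare your attempt against.

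That said, your sketch is in the right spirit and is essentially the approach of the cited reference: random delays in a window of size $\Theta(\congestion/\log n)$, a Chernoff plus union bound to cap the per-edge per-virtual-round load at $O(\log n)$, and then a local simulation of the virtual schedule. Your expectation calculation is fine. The part you correctly flag as delicate---recovering $O(\congestion)$ rather than $O(\congestion\log n)$---is handled in~\cite{ghaffari2015near} not by a separate ``synchronization factor'' as you describe, but by grouping virtual rounds into phases of length $\Theta(\log n)$ and observing that each phase carries $O(\log^2 n)$ packets per edge with high probability, so the $\congestion/\log n$ phases contribute $O(\congestion\log n)$ naively; getting down to $O(\congestion)$ requires a slightly sharper phase analysis than what you wrote. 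Your final paragraph's queue-based explanation is plausible but not quite how the cited paper does it, and the hand-waved split into ``serialization'' versus ``synchronization'' factors is not a rigorous accounting. For the purposes of this paper, though, the statement is simply imported, so a full reconstruction is not needed.
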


We prove the following diameter lower bound for tree packings with small congestion.

\begin{theorem}
Let $\lambda \geq \log^4{n}$ and $\CCC \leq \lambda/\log^4{n}$.
There are unweighted $\lambda$-edge-connected graphs $G$ with diameter $O(\log{n})$ such that if we decompose the edges of $G$ into $\lambda$ spanning subgraphs where each edge appears in at most $\CCC$ subgraphs, at least one of the subgraphs has diameter $\tilde{\Omega}(n/\lambda)$.
\end{theorem}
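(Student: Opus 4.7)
My plan is to prove the lower bound by contradiction, using a graph $G$ from the family of \cref{lower_bound_broadcast} (edge connectivity $\geq \lambda$, diameter $O(\log n)$, distinguished nodes $s$ and $t$). Assume for contradiction that the edges of $G$ admit a decomposition into $\lambda$ spanning subgraphs $G_1,\ldots,G_\lambda$ with per-edge congestion $\CCC$ such that \emph{every} $G_i$ has diameter at most $d$. The plan is to use this decomposition to design a fast $s$-to-$t$ communication protocol and then invoke \cref{lower_bound_broadcast} to force $d=\tilde{\Omega}(n/\lambda)$.

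\textbf{The protocol.} To send $K$ bits from $s$ to $t$ (for a parameter $K$ chosen below), I will proceed as follows. Since \cref{lower_bound_broadcast} is valid even when the entire topology is known to the nodes, I can assume that each node has locally precomputed the decomposition together with a BFS tree $T_i$ of $G_i$ rooted at $s$ (depth at most $d$, no communication required). Split the $K$-bit string into $\lambda$ blocks of $K/\lambda$ bits each, equivalently $M := K/(\lambda \log n)$ messages of size $O(\log n)$, and define subroutine $\mathcal{A}_i$ to be the pipelined routing of the $i$-th block along the $s$-$t$ path in $T_i$. Each $\mathcal{A}_i$ has dilation $O(d+M)$ and loads each edge of its path by at most $M$ messages. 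Since every edge of $G$ is contained in at most $\CCC$ of the $G_i$'s, the total congestion over all $\mathcal{A}_i$'s on any edge of $G$ is at most $\CCC \cdot M$. Invoking \cref{lem:scheduling} to run $\mathcal{A}_1,\ldots,\mathcal{A}_\lambda$ concurrently yields a protocol of complexity
\[
T \;=\; O(\CCC \cdot M) \;+\; O\bigl((d+M)\log^2 n\bigr).
\]

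\textbf{Deriving the bound.} Now set $K := n \log^2 n / \lambda$, so that the Ghaffari--Kuhn lower bound of \cref{lower_bound_broadcast} gives $T = \Omega(\min\{K/\log^2 n,\, n/\lambda\}) = \Omega(n/\lambda)$. Substituting $M = n \log n / \lambda^2$, the upper bound splits into three terms: $\CCC \cdot n \log n/\lambda^2$, $\; d \log^2 n$, and $\; n\log^3 n/\lambda^2$. Using $\CCC \leq \lambda/\log^4 n$ gives $\CCC \cdot n\log n/\lambda^2 \leq n/(\lambda \log^3 n)$, and using $\lambda \geq \log^4 n$ gives $n\log^3 n/\lambda^2 \leq n/(\lambda \log n)$. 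Both of these are $o(n/\lambda)$ and hence are absorbed by the $\Omega(n/\lambda)$ lower bound. The only surviving term is $d \log^2 n$, which must therefore satisfy $d \log^2 n = \Omega(n/\lambda)$, i.e.\ $d = \tilde{\Omega}(n/\lambda)$, contradicting the assumption that every subgraph has diameter at most $d$.

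\textbf{Main obstacle.} The delicate part of the argument is the bookkeeping of logarithmic factors: the congestion term contributes a factor of $\CCC$, the scheduler contributes an extra $\log^2 n$, and the Ghaffari--Kuhn bound itself has a $\log^2 n$ in it. The value of $K$ must be tuned so that the Ghaffari--Kuhn minimum is dominated by the $K/\log^2 n$ branch while keeping both the congestion and the ``$M \log^2 n$'' dilation contributions strictly below $n/\lambda$. The hypotheses $\lambda \geq \log^4 n$ and $\CCC \leq \lambda/\log^4 n$ are chosen precisely to provide the polylogarithmic slack needed in these two estimates; weakening either of them would invalidate the calculation above.
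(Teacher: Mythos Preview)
Your proof is correct and follows essentially the same approach as the paper: both use the Ghaffari--Kuhn hard instance, route $K/\lambda$ bits along each of the $\lambda$ spanning subgraphs via the basic pipelined algorithm, invoke the random-delay scheduler of \cref{lem:scheduling}, and set $K = \Theta((n/\lambda)\log^2 n)$ so that the lower bound of \cref{lower_bound_broadcast} becomes $\Omega(n/\lambda)$ and forces $d\log^2 n = \Omega(n/\lambda)$. The only cosmetic difference is that the paper writes the upper bound as $O((\DDD + \CCC k/\lambda)\log^2 n)$, absorbing your separate $M\log^2 n$ dilation term into the $\CCC\cdot M$ term via $\CCC \geq 1$, whereas you bound the two terms individually (which is why you explicitly invoke $\lambda \geq \log^4 n$); the arithmetic and the conclusion are identical.
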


\begin{proof}
Suppose the graph is decomposed into $\lambda$ spanning subgraphs with diameter at most $\DDD$ and congestion at most $\CCC$. Consider the task of sending $K=\Theta(k \log{n})$ bits of information from a node $s$ to a node $t$. To solve this task, we let each of the $\lambda$ spanning subgraphs be responsible for sending $K/\lambda = \Theta( (k/\lambda) \log{n})$ bits of information. To do so, we run the basic broadcast algorithm of \cref{very_old_k_broadcast} in each subgraph in parallel using the scheduling algorithm of \cref{lem:scheduling}.
The overall round complexity for sending $K=\Theta(k \log{n})$ bits of information from a node $s$ to a node $t$ is $O((\DDD + \CCC k/\lambda) \log^2{n})$ rounds \whp. 
By \cref{lower_bound_broadcast}, this task requires $\Omega(\min{ \{ k/\log{n}, n/\lambda\}})$ rounds in a family of $\lambda$-edge-connected graphs with diameter $O(\log{n})$. If we choose $k=n \log{n}/\lambda$, then the lower bound becomes $\Omega(n/\lambda)$. On the other hand, since $\CCC \leq \lambda/\log^4{n}$, we infer that $O((\CCC k /\lambda) \log^2{n}) = o(n/\lambda)$, implying that $\DDD = \Omega(n/(\lambda \cdot \log^2{n})).$
\end{proof}

\end{document}